\documentclass[11pt,letterpaper]{article}
\usepackage[T1]{fontenc}
\usepackage{lmodern}
\usepackage[margin=1in]{geometry}
\usepackage{amsmath,amssymb,amsthm,mathtools}
\usepackage{microtype,booktabs,array}
\usepackage{enumitem,fancyhdr,needspace}
\usepackage[hidelinks]{hyperref}
\hypersetup{pdftitle={Euclidean SVP is deterministically NP-hard to approximate within any constant factor},pdfauthor={Daqing Wan},pdfsubject={Revised proof by rank and support tensor amplification}}
\numberwithin{equation}{section}
\newtheorem{theorem}{Theorem}[section]
\newtheorem{lemma}[theorem]{Lemma}
\newtheorem{proposition}[theorem]{Proposition}
\theoremstyle{definition}
\newtheorem{problem}[theorem]{Problem}
\theoremstyle{remark}
\newtheorem{remark}[theorem]{Remark}
\newtheorem*{remark*}{Remark}
\DeclareMathOperator{\supp}{supp}
\DeclareMathOperator{\rank}{rank}
\DeclareMathOperator{\tr}{tr}
\DeclareMathOperator{\Tr}{Tr}
\DeclareMathOperator{\diag}{diag}
\newcommand{\Z}{\mathbb Z}
\newcommand{\R}{\mathbb R}
\newcommand{\F}{\mathbb F}
\newcommand{\E}{\mathbb E}
\newcommand{\cB}{\mathcal B}
\newcommand{\YES}{\mathrm{YES}}
\newcommand{\NO}{\mathrm{NO}}
\setlist[itemize]{topsep=4pt,itemsep=2pt}
\allowdisplaybreaks[1]
\title{\textbf{Euclidean SVP is deterministically NP-hard\\
to approximate within any constant factor}}
\author{Daqing Wan\thanks{ChatGPT assisted with drafting and proof revision.}}
\date{}
\begin{document}
\maketitle
\begin{abstract}
Using the deterministic binary Reed--Solomon projection theorem from our earlier work, we prove that, for every constant $\rho>1$, Euclidean $\rho$-GapSVP is NP-hard under a deterministic polynomial-time many-one reduction. The same conclusion holds in every fixed finite $\ell_p$ norm, with the usual effectiveness convention for a fixed real $p$. The reduction from Gap Exact Set Cover produces an integer lattice whose NO instances have $\ell_1$ minimum at least $2k$ and Hamming distance at least $k+1$ after reduction modulo a prime $q$. A two-factor Euclidean inequality supplies the initial tensor bound. We then combine an integer collision inequality, modular Hamming distance, and a low-rank determinant estimate to amplify the squared Euclidean minimum. Repeating this argument with larger tensor blocks gives any prescribed constant gap. The freedom to choose the gadget exponent after the tensor order keeps the required prime polynomially bounded. An explicit sign embedding proves the extension to all fixed finite norms. The reduction, amplification, and norm-transfer proofs are included in full; the arithmetic projection theorem is stated precisely as the external input.
\end{abstract}
\noindent\textbf{Keywords.} Shortest vector problem; deterministic NP-hardness; tensor products; Reed--Solomon codes; Hamming distance; norm embeddings.

\section{Introduction}
Let $L\subseteq\R^N$ be a lattice. Its Euclidean minimum is
\[
 \lambda_2(L)=\min_{0\ne x\in L}\|x\|_2.
\]
The shortest vector problem asks for a nonzero lattice vector attaining this minimum. We study its approximation promise problem.

\begin{problem}[Euclidean $\rho$-GapSVP]\label{prob:svp}
Fix $\rho\ge1$. Given an integer lattice basis and a positive rational number $d$, distinguish
\[
 \YES:\ \lambda_2(L)\le d,
 \qquad
 \NO:\ \lambda_2(L)>\rho d,
\]
under the promise that one of these cases holds.
\end{problem}

Ajtai established exact Euclidean SVP hardness under randomized reductions~\cite{ajtai}, Micciancio obtained hardness for every fixed factor below $\sqrt2$~\cite{mic01}, and Khot extended randomized hardness to arbitrary constants~\cite{khot}. Haviv and Regev used lattice tensor products to obtain stronger, dimension-dependent randomized hardness~\cite{hr}. Their analysis controls general tensor vectors through properties of the constructed lattices: a shortest vector in a tensor product need not be a pure tensor.

Our earlier work~\cite{wan} constructs a deterministic Reed--Solomon locally dense lattice with a surjective binary projection. Every prescribed binary pattern occurs as the projection of a binary vector of one fixed weight in a specified lattice coset. We use this result in the parameter form stated in Theorem~\ref{thm:gadget}. Together with the classical hardness of Gap Exact Set Cover, it gives the following theorem.

\begin{theorem}[Euclidean hardness]\label{thm:main}
For every constant $\rho>1$, Euclidean $\rho$-GapSVP is NP-hard under a deterministic polynomial-time many-one reduction.
\end{theorem}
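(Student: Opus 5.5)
The plan follows the outline in the abstract. First I build a basic lattice with a constant gap. Then I amplify that gap by tensoring, proving a lower bound on the minimum that does not assume shortest vectors are pure tensors. Finally I choose parameters so that everything, including the prime $q$, stays polynomial.

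\emph{Step 1: base reduction.} I start from Gap Exact Set Cover with parameter $k$, and combine the instance with the Reed--Solomon gadget of Theorem~\ref{thm:gadget} in the Micciancio/Khot style. The gadget's surjective binary projection lets every candidate exact cover, viewed as a binary pattern, be realized by a binary vector of fixed weight in the prescribed coset. In the YES case this yields a nonzero lattice vector $x$ with $\|x\|_2^2=\|x\|_1\le k$ (suitably normalized), and the coordinates are $0,\pm1$. In the NO case I aim to show two things for every nonzero $x\in L$. One is $\|x\|_1\ge 2k$. The other is that either $x\equiv 0 \pmod q$, or the reduction of $x$ modulo $q$ has Hamming weight at least $k+1$. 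The second bound comes from the fact that a codeword of the Reed--Solomon part must have large support, while the set cover part forbids small exact covers. For $\ell_1$-type reasoning I split into two cases: vectors that are a multiple of $q$, which are long trivially once $q^2>2k$, and vectors that are not.

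\emph{Step 2: initial tensor bound.} For $L\otimes L$ I use a two-factor inequality. I write a tensor vector as a matrix $X$ whose columns lie in $L$, and bound $\|X\|_F^2$ below by combining the $\ell_1$ bound on the columns with the Hamming bound on the rows. The guiding intuition is that entries are integers, so $x_i^2\ge|x_i|$. This should give $\lambda_2(L\otimes L)^2\gtrsim 2k\cdot(k+1)$, against the YES value $k^2$.

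\emph{Step 3: amplification, the main obstacle.} For higher tensor powers, or larger tensor blocks $L^{\otimes t}$, general vectors are not pure tensors. This is exactly the difficulty Haviv and Regev handled randomly. My plan is to treat $X$ as a matrix of rank $r$. There are two cases.
\begin{itemize}
\item If $r$ is small, a low-rank determinant estimate applies. A nonzero $r\times r$ minor of $X$ is an integer, so it is at least $1$ in absolute value. Together with the modular Hamming distance $\ge k+1$, this forces many rows to be nonzero modulo $q$. I would then bound $\|X\|_F^2$ via the collision inequality $\sum_i x_i^2\ge \|x\|_1$ plus the count of distinct nonzero rows.
\item If $r$ is large, each of the $r$ independent columns contributes at least $\lambda_2(L)^2$, which again gives a large total.
\end{itemize}
Balancing the two cases should show that the squared minimum grows by a factor $c>1$ per level relative to the YES value. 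I expect the care here to be in making the rank argument work modulo $q$ rather than over $\mathbb{Q}$. For that, $q$ must exceed the relevant minors, which are bounded by a product of entry bounds.

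\emph{Step 4: parameters.} I iterate the argument $t=O(\log\rho)$ times, a constant, so the dimension $N^{2^t}$ stays polynomial. The required prime size depends on $t$. Because Theorem~\ref{thm:gadget} allows the gadget exponent to be chosen after $t$ is fixed, $q$ stays polynomially bounded, and a prime of that size can be found deterministically. Finally, the YES value scales as $d^{2^t}$ while the NO value exceeds $\rho^2 d^{2^t}$, which yields Theorem~\ref{thm:main}.
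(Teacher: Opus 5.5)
\textbf{There is a genuine gap in Step~3, the heart of the argument: your rank dichotomy cannot close.}

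In your high-rank case, $r$ independent columns of the tensor matrix $W$ (your $X$) give only $\|W\|_F^2\ge r\,\lambda_2(U)^2\approx raD$. This reaches the target $c_{j+1}D^{j+1}$ only when $r$ is about $D^j$ (about $D$ in the other orientation), which is polynomially large in $k$.

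The low-rank estimate, however, decays like $1/r$. The Gram-matrix AM--GM bound, combined with Minkowski's covolume bound $\det\Lambda\ge v_r(\lambda_2(\Lambda)/2)^r$, yields $\|W\|_F^2\ge\frac{9}{4r}\lambda_2(U)^2\lambda_2(U^{\otimes j})^2$. This is useful only for constant $r$. The fact that a nonzero minor is at least $1$ gives no lower bound of this kind. Integrality of minors enters only through the \emph{upper} Hadamard bound $|\det W_0|<(aD^n)^{a+1}<q$, which transfers a modular rank bound to the reals. That transfer itself needs a constant rank cutoff: minors of non-constant order are exponentially large and would force an exponential prime.

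The missing idea is an argument that rules out \emph{every} modular rank $\ge 2a+1$, with $F_a=a(a+1)/3$. Suppose $\rank_{\F_q}(W\bmod q)\ge2a+1$.
\begin{enumerate}
\item Every nonzero combination of rows lies in $(U\bmod q)^{\otimes j}$, so it has at least $D^j$ nonzero entries. Hence at least $D^j$ columns lie outside any proper subspace of the column space.
\item Greedily choose $2a+1$ columns with independent reductions, each of squared norm at most $\|W\|_F^2/D^j<c_{j+1}D$.
\item Pairwise differences of these columns are nonzero in $U\bmod q$. So each pair differs on at least $D$ of the $R<(c_{j+1}/c_j)D$ active rows, and agrees on at most $R-D$ of them.
\item Apply the integer collision inequality $\sum_{i=1}^{2a+1}z_i^2+a^2\#\{i<l:z_i=z_l\}\ge a(a+1)(2a+1)/3$ row by row. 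The total then exceeds $(2a+1)c_{j+1}D$, a contradiction.
\end{enumerate}
This is what dictates the recurrence $c_{j+1}=a^3c_j/(a^3-F_a+c_j)$. The remaining case, real rank $\le 2a$, is then handled by the low-rank bound, because $c_{j+1}/c_j<9/8$.

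\textbf{Consequences for your other steps.} Your seed $2k(k+1)$ is true but too weak. Against the YES value $k^2$ it gives squared ratio about $2$, no better than the untensored base lattice. The collision mechanism gains nothing at coefficient $a=2$, since its fixed point $a(a+1)/3$ equals $a$ there. You need the $\ell_1$ bound on both rows and columns, plus Cauchy--Schwarz on the active rectangle. This gives $\lambda_2(L^{\otimes2})^2\ge\lambda(L)^2\ge4k^2$, i.e.\ $a=4$ with $D=k^2$.

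The gain is also not a fixed factor per tensor level. Within a phase the coefficient only creeps from $a$ toward $a(a+1)/3$; for $a=4$, $c_2\approx4.17$. So you need many levels per phase, and several phases on regrouped blocks with targets $4,6,12,24,\ldots$, rather than $O(\log\rho)$ squarings. The total order $T$ is still a constant. But $\varepsilon$ must then be chosen after $T$ so that $\rho^2(1+\varepsilon)^T<b$. The gadget exponent must satisfy $\xi<1/(2T(b+1))$, which ensures $q>(bk^T)^{b+1}$ while keeping $q$ polynomial.
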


\paragraph{Changes from the first version.}
The tensor amplification part of the first version relies on the Lee-metric tensor-product formula in the published paper of N.~Ojiro and H.~Matsui~\cite{om} to claim exact multiplication of the $\ell_1$ minimum of integer lattices. That Lee-metric formula is false, and the resulting lattice identity is false as well. Appendix~\ref{app:counterexample} gives an explicit counterexample to both statements, including the quotient-code formulation with all the relevant modulus conditions. This version replaces that argument with a different tensor amplification proof based on Euclidean length, matrix rank, and Hamming distance modulo a prime. We retain the block reduction and the Reed--Solomon gadget, and include their statements and the necessary soundness proofs here. The results proved in this version concern every fixed approximation factor; the dimension-dependent claims of the first version are not retained.

\subsection{The reduction and its two soundness bounds}
For an integer-coordinate lattice, write
\[
 \lambda(L)=\min_{0\ne x\in L}\|x\|_1.
\]
For every fixed rational $0<\varepsilon<1$, the reduction constructs a lattice $L\subseteq\Z^N$, a prime $q$, and an integer $k$ such that
\begin{equation}\label{eq:intro-gap}
\begin{aligned}
 \YES:&\quad \text{some }0\ne y\in L\cap\{0,1\}^N
                   \text{ has }\|y\|_1<(1+\varepsilon)k,\\
 \NO:&\quad \lambda(L)\ge2k,\qquad d_H(L\bmod q)\ge k+1.
\end{aligned}
\end{equation}
Here $L\bmod q$ is a linear code over $\F_q$, and $d_H$ denotes minimum nonzero Hamming weight. The integer lower bound comes from the block reduction. Its support argument also works over $\F_q$, and this gives the modular bound needed for amplification.

Minimum Hamming distance multiplies under tensor products of linear codes over a field. This is the coding-theoretic amplification mechanism used, for example, by Cheng and Wan~\cite{cw}. Our proof uses that field-code property and estimates integer tensor vectors separately.

\subsection{The amplification argument}
For integer-coordinate lattices $U,V$, a row-and-column argument proves
\begin{equation}\label{eq:intro-seed}
 \lambda_2(U\otimes V)^2\ge\lambda(U)\lambda(V).
\end{equation}
Thus a NO lattice in \eqref{eq:intro-gap} satisfies
\[
 \lambda_2(L^{\otimes2})^2\ge4k^2,
 \qquad d_H(L^{\otimes2}\bmod q)\ge(k+1)^2.
\]
This tensor square is the starting point of the proof.

More generally, suppose an integer $a\ge4$ and a distance scale $D$ satisfy
\[
 \lambda_2(U)^2\ge aD,\qquad d_H(U\bmod q)\ge D.
\]
Theorem~\ref{thm:phase} shows that, for any prescribed finite tensor order and a sufficiently large prime,
\[
 \lambda_2(U^{\otimes j})^2\ge c_jD^j,
 \qquad c_j\longrightarrow\frac{a(a+1)}3
 \quad\text{increasingly}.
\]
The proof views a hypothetical short tensor as an integer matrix. If its rank modulo $q$ is large, the Hamming bound supplies many short columns whose reductions are independent. Such columns differ in many coordinates, and an integer collision inequality then forces too much squared length. If its modular rank is small, a bound on integer minors transfers this rank bound to the reals. A Euclidean determinant estimate rules out the remaining low-rank case.

One application has a finite limiting coefficient. To go further, we use the amplified tensor block as the input to the same theorem with a larger integer coefficient. The coefficients can successively reach
\[
 4,\ 6,\ 12,\ 24,\ 48,\ldots.
\]
For any prescribed constant bound, Proposition~\ref{prop:schedule} therefore gives an integer $b$ above that bound and an even tensor order $T$ such that
\begin{equation}\label{eq:intro-tensor}
 \lambda_2(L^{\otimes T})^2\ge bk^T
\end{equation}
in NO instances. In YES instances, the binary vector $y^{\otimes T}$ has squared norm less than $(1+\varepsilon)^Tk^T$. We first choose $b>\rho^2$ and its tensor schedule, then choose $\varepsilon$ so that $\rho^2(1+\varepsilon)^T<b$. The gadget exponent is chosen after these constants. This order makes all prime requirements compatible with a polynomial-time construction.

\begin{theorem}[All fixed finite norms]\label{thm:allp}
For every fixed finite effective $p\ge1$ and every constant $\rho>1$, $\rho$-GapSVP$_p$ is NP-hard under a deterministic polynomial-time many-one reduction.
\end{theorem}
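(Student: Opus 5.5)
We would prove Theorem~\ref{thm:allp} by transferring the Euclidean gap of Theorem~\ref{thm:main} to $\ell_p$, with the target factor raised to a larger constant. The constant loss in the transfer is absorbed because Theorem~\ref{thm:main} holds for every constant. Concretely, we start from the instance produced for \eqref{eq:intro-tensor}, with Euclidean gap parameter $\rho'$ to be fixed later. It is a lattice $M=L^{\otimes T}\subseteq\Z^{N^T}$ with the following two properties. In YES instances it contains a nonzero binary vector $y^{\otimes T}$ of weight $w<(1+\varepsilon)^Tk^T$. In NO instances every nonzero $x\in M$ has $\|x\|_2^2\ge bk^T$, where $b>\rho'^2(1+\varepsilon)^T$.

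\emph{Case $p\ge2$.} No embedding is needed. For integers, $|x_i|^p\ge|x_i|^2$, so every nonzero $x\in M$ satisfies $\|x\|_p^p\ge\|x\|_2^2\ge bk^T$ in NO instances. In YES instances the binary witness has $\|y^{\otimes T}\|_p^p=w$. Hence the $\ell_p$ ratio is at least $(b/w)^{1/p}>\rho'^{2/p}(1+\varepsilon)^{\cdots}$ to first order. It suffices to pick $\rho'$ with $\rho'^{2/p}\ge\rho$ and output the same basis with threshold $d=w_0^{1/p}$, where $w_0$ is the YES weight bound rounded to a rational.

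\emph{Case $1\le p<2$.} Here integrality goes the wrong way, so we use an explicit sign embedding. Let $n=N^T$. Fix an explicit family $S\subseteq\{\pm1\}^n$ of polynomial size that is $4$-wise independent, for instance from a BCH or dual-Reed--Solomon construction over $\F_{2^m}$ with $|S|=O(n^2)$. Define the linear map
\[
\Phi(x)=(\langle x,s\rangle)_{s\in S},
\]
and output the integer lattice $\Phi(M)$ with basis $\Phi(B)$. The map $\Phi$ is injective because $\sum_s\langle x,s\rangle^2=|S|\,\|x\|_2^2$ by pairwise independence. Let $Y=\langle x,s\rangle$ for uniform $s\in S$. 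By $4$-wise independence,
\[
\E Y^2=\|x\|_2^2,\qquad \E Y^4\le3\|x\|_2^4.
\]
The upper bound comes from Jensen's inequality: $\E|Y|^p\le(\E Y^2)^{p/2}$. For the lower bound, write $2=p\theta+4(1-\theta)$ and apply Hölder's inequality:
\[
\E Y^2\le(\E|Y|^p)^{\theta}(\E Y^4)^{1-\theta}.
\]
This gives $\E|Y|^p\ge c_p\|x\|_2^p$ with $c_p=3^{-(1-\theta)/\theta}>0$. Thus
\[
c_p|S|\,\|x\|_2^p\le\|\Phi(x)\|_p^p\le|S|\,\|x\|_2^p,
\]
which is a distortion $c_p^{-1/p}$ embedding of every lattice vector simultaneously. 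Consequently the $\ell_p$ gap is at least $c_p^{1/p}\rho'$. We choose $\rho'=\rho\,c_p^{-1/p}$, then run the schedule of Proposition~\ref{prop:schedule} and the choice of $\varepsilon$ and gadget exponent as in the proof of Theorem~\ref{thm:main}. The threshold $d$ is a rational upper bound for $(|S|w_0^{p/2})^{1/p}$. Under the effectiveness convention for $p$, this can be computed to the needed precision, leaving slack between $\rho$ and the true ratio.

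The main obstacle is the explicit $4$-wise independent sign family: it must be computed deterministically in polynomial time and indexed by coordinates of the $n$-dimensional tensor space. The plan is to use the standard construction $s_i=(-1)^{\Tr(\alpha\cdot h(i))}$ from evaluations of degree-$3$ polynomials over $\F_{2^m}$. Its $4$-wise independence can be checked directly, and it has size $2^{4m}=O(n^4)$, which is polynomial since $T$ is constant. A secondary point is that the rational threshold and the effective $p$ need enough precision that the strict inequalities survive rounding. This is handled by leaving a constant multiplicative slack in the choice of $\rho'$.
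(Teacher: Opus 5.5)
Your proposal is correct and follows the paper. The $p\ge2$ case is exactly the direct integrality argument given at the end of Section~\ref{sec:norms}. For $1\le p<2$, your trace-of-cubic-polynomials family is the paper's sign embedding of Theorem~\ref{thm:embedding} with $r=2$. There, your Jensen upper bound and your Hölder interpolation between $p$ and $4$ (giving $c_p=3^{-(2-p)/2}$) replace the paper's route through $\E|X|$ and monotonicity of means. The paper's official proof applies the embedding uniformly for every fixed $p$, which needs $2r\ge p$ and the bound $\E X^{2r}\le(2r-1)!!\,\|x\|_2^{2r}$; your case split avoids this because the embedding is only needed below $p=2$.
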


The fixed-$p$ convention is given in Section~\ref{sec:prelim}; it includes every fixed rational $p\ge1$. For $p\ge2$, the binary YES vector and integrality give a direct proof. An explicit deterministic sign embedding also proves a norm transfer for all fixed finite $p$, including $p=1$.

\paragraph{Related deterministic hardness results.}
For every fixed $p>2$ and $\varepsilon>0$, Hair and Sahai~\cite{hs-high} proved NP-hardness of approximating SVP$_p$ within $2^{(\log n)^{1-\varepsilon}}$ under deterministic polynomial-time Karp reductions. Hecht and Safra~\cite{hecht-safra} independently obtained deterministic hardness for SVP$_p$ and unique-SVP$_p$: their constant-factor results apply to every $p>2$ under assumptions excluding specified deterministic subexponential-time algorithms, and their stronger approximation gaps apply to sufficiently large $p$. More recently, Hair and Sahai~\cite{hs-poly} obtained deterministic polynomial-time NP-hardness within $n^\varepsilon$ for every fixed $2<p<\infty$ and every $0<\varepsilon<\min\{(p-2)/(4p),1/8\}$, where $n$ is the lattice rank. They also obtain every exponent $\varepsilon<1/8$ for $p=\infty$; earlier deterministic hardness in that norm was established by Dinur~\cite{dinur}.

For all fixed finite $\ell_p$ norms, Hair and Sahai~\cite{hs-all} proved hardness within $2^{(\log n)^{1-o(1)}}$ by deterministic subexponential-time reductions, assuming $\mathrm{NP}\nsubseteq\bigcap_{\delta>0}\mathrm{DTIME}(\exp(n^\delta))$. Thus deterministic hardness for all finite norms was already known under that stronger complexity assumption. Theorem~\ref{thm:allp} gives deterministic polynomial-time NP-hardness for every fixed approximation factor in every fixed finite norm.

\subsection{Organization and external input}
Section~\ref{sec:prelim} fixes conventions. Section~\ref{sec:base} gives the source reduction, states the gadget, and proves both bounds in \eqref{eq:intro-gap}. Sections~\ref{sec:euclidean} and~\ref{sec:amplification} establish the tensor estimates. Section~\ref{sec:hardness} chooses the parameters and completes the Euclidean reduction. Section~\ref{sec:norms} proves the finite-norm extension, and Section~\ref{sec:scope} explains the scope of the results.

The only imported nonclassical lattice result is the binary projection theorem from~\cite[Theorem 7.3 and its preceding proof]{wan}. Its arithmetic-geometric proof is not reproduced here. We also use the standard NP-hardness of Gap Exact Set Cover. The lattice reduction, tensor estimates, and norm transfer needed after these inputs are proved below.

\section{Preliminaries}\label{sec:prelim}
\subsection{Lattices, norms, and encoding}
For a full-column-rank matrix $B\in\Z^{N\times n}$, let
\[
 L(B)=B\Z^n\subseteq\Z^N.
\]
Its rank is $n$ and its ambient dimension is $N$. In this paper, an integer lattice has integer coordinates in the specified embedding. Put
\[
 \lambda_p(L)=\min_{0\ne x\in L}\|x\|_p\quad(1\le p<\infty),
 \qquad \lambda(L)=\lambda_1(L).
\]
The problem $\rho$-GapSVP$_p$ is Problem~\ref{prob:svp} with $\lambda_p$ in place of $\lambda_2$.

For a fixed real $p$, we require that upper and lower rational approximations to $u^{1/p}$, to any fixed relative accuracy, can be computed in time polynomial in the bit length of a positive rational $u$. This holds for fixed rational $p$ and for fixed polynomial-time computable real $p\ge1$. Every use of such an approximation will have fixed strict slack. The Euclidean output radius is rational exactly.

For integer and binary vectors, respectively,
\begin{equation}\label{eq:integer-norms}
 \|z\|_2^2\ge\|z\|_1\quad(z\in\Z^N),
 \qquad \|y\|_p^p=\|y\|_1\quad(y\in\{0,1\}^N).
\end{equation}
For $p\ge2$, also $\|z\|_p^p\ge\|z\|_2^2$. These facts follow coordinate by coordinate.

\subsection{Tensor products and matrix rank}
For column bases $B\in\Z^{N\times n}$ and $C\in\Z^{M\times m}$, define
\[
 L(B)\otimes L(C)=L(B\otimes C)\subseteq\Z^{NM}.
\]
A change of lattice basis uses a unimodular integer matrix, so this definition is independent of the chosen bases. With a consistent ordering of the tensor coordinates, every vector can be reshaped as
\begin{equation}\label{eq:matrix-form}
 W=BZC^T,\qquad Z\in\Z^{n\times m}.
\end{equation}
The matrix $W$ has $N$ rows and $M$ columns. In the factorization $W=B(ZC^T)$, the matrix $ZC^T$ has integer entries. Each column of $W$ is therefore an integer linear combination of the columns of $B$, and belongs to $L(B)$. Similarly,
\[
 W^T=C(Z^TB^T),
\]
where $Z^TB^T$ has integer entries. Thus every column of $W^T$, equivalently every transposed row of $W$, belongs to $L(C)$. The squared Euclidean norm of the original tensor vector is the squared Frobenius norm of its matrix form:
\[
 \|W\|_F^2=\sum_{i=1}^N\sum_{j=1}^M W_{ij}^2.
\]
Here $i$ indexes the rows and $j$ indexes the columns, so the sum includes all entries of $W$.
For a pure tensor, $\|x\otimes y\|_p=\|x\|_p\|y\|_p$. A general lattice tensor is an integer sum of pure tensors, so the NO analysis must handle all matrices in \eqref{eq:matrix-form}.

The real rank of an integer matrix is its rank over $\R$, equivalently over $\mathbb Q$, or the largest order of a minor with nonzero integer determinant. Reduction modulo a prime may make such a determinant vanish, and hence
\[
 \rank_{\F_q}(W\bmod q)\le\rank_{\R}W.
\]
We will obtain a converse up to a fixed cutoff by showing that the relevant nonzero minors are too small to be divisible by $q$.

If $B$ has rank $n$, its $t$-fold Kronecker power has rank $n^t$. If $B$ has $N$ rows and entries of at most $s$ bits, this power has $N^t$ rows and entries of at most $ts+O(t)$ bits. A fixed tensor order therefore preserves polynomial encoding length.

\subsection{Reduction modulo a prime}
For $L\subseteq\Z^N$, write
\[
 L\bmod q=\{x\bmod q:x\in L\}\subseteq\F_q^N.
\]
This is the linear code generated by the reductions of the basis columns. For a linear code $C$, let $d_H(C)$ be its minimum nonzero support size, with $d_H(\{0\})=\infty$. Reducing tensor generators gives
\begin{equation}\label{eq:mod-tensor}
 (U\otimes V)\bmod q=(U\bmod q)\otimes_{\F_q}(V\bmod q).
\end{equation}
This statement concerns the images of generators and does not require reduction to be injective.

\begin{lemma}[Hamming distance under tensoring]\label{lem:hamming}
If linear codes $C_1,C_2$ over a field have minimum distances at least $D_1,D_2$, then every nonzero word of $C_1\otimes C_2$ has weight at least $D_1D_2$. Consequently,
\[
 d_H(U\bmod q)\ge D
 \quad\Longrightarrow\quad
 d_H(U^{\otimes j}\bmod q)\ge D^j.
\]
\end{lemma}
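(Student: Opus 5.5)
We will prove the first assertion with the standard row-and-column argument in the matrix picture of Section~\ref{sec:prelim}. Suppose $C_1\subseteq\F^{N}$ and $C_2\subseteq\F^{M}$ have generator matrices $G_1\in\F^{N\times n}$ and $G_2\in\F^{M\times m}$. Every word of $C_1\otimes C_2$ can then be reshaped, just as in \eqref{eq:matrix-form}, into a matrix $W=G_1ZG_2^T$ with $Z\in\F^{n\times m}$. In this form every column of $W$ lies in $C_1$, and every row of $W$ (transposed) lies in $C_2$. The factorizations $W=G_1(ZG_2^T)$ and $W^T=G_2(Z^TG_1^T)$ give this exactly as in the integer case, since over a field no integrality issue arises.

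Now let $W\ne0$. We first pick a nonzero column of $W$. That column is a nonzero word of $C_1$, so it has at least $D_1$ nonzero entries. Hence at least $D_1$ rows of $W$ are nonzero. Each such row is a nonzero word of $C_2$, so it contains at least $D_2$ nonzero entries. Summing over these disjoint rows gives at least $D_1D_2$ nonzero entries of $W$. This is the weight of the original tensor word, because reshaping is only a bijective relabeling of coordinates. If either code is $\{0\}$, the tensor code is $\{0\}$ and the claim holds vacuously with $d_H=\infty$.

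For the consequence, we induct on $j$. The case $j=1$ is the hypothesis. For the inductive step, write $U^{\otimes j}=U^{\otimes(j-1)}\otimes U$ and apply \eqref{eq:mod-tensor}:
\[
U^{\otimes j}\bmod q=(U^{\otimes(j-1)}\bmod q)\otimes_{\F_q}(U\bmod q).
\]
The first part then gives $d_H\ge D^{j-1}\cdot D$. The only point needing care, and it is mild, is the identification of $(U\otimes V)\bmod q$ with the field tensor code. This identification is supplied by \eqref{eq:mod-tensor}, and it concerns only the spans of reduced generators, so non-injectivity of reduction causes no difficulty.
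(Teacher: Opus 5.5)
Your proof is correct and follows the paper's argument: reshape a nonzero tensor word as a matrix whose columns lie in $C_1$ and whose rows lie in $C_2$, count at least $D_1$ nonzero rows each of weight at least $D_2$, and then iterate using $(U\otimes V)\bmod q=(U\bmod q)\otimes_{\F_q}(V\bmod q)$. Your explicit treatment of the zero code and of the induction step $U^{\otimes j}=U^{\otimes(j-1)}\otimes U$ spells out what the paper summarizes as ``iteration.''
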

\begin{proof}
Reshape a nonzero tensor word as a matrix with columns in $C_1$ and rows in $C_2$. A nonzero column has at least $D_1$ nonzero coordinates, so at least $D_1$ rows are nonzero. Each such row has at least $D_2$ nonzero coordinates. Summing gives the product lower bound. Iteration and \eqref{eq:mod-tensor} give the last assertion. For nonzero codes, a pure tensor of minimum-weight words attains the product, so their distances in fact multiply exactly.
\end{proof}

\section{A deterministic binary gap with modular soundness}\label{sec:base}
\subsection{Gap Exact Set Cover}
A Set Cover instance consists of a nonempty universe, which we label $\{1,\ldots,u\}$, and subsets $S_1,\ldots,S_r$ of this universe. Its incidence matrix $S\in\{0,1\}^{u\times r}$ has $S_{ij}=1$ if $i\in S_j$, and $S_{ij}=0$ otherwise, for $1\le i\le u$ and $1\le j\le r$. Thus rows correspond to universe elements and columns correspond to the given subsets. For $c\in\{0,1\}^r$, the equation $Sc=\mathbf1$ means that the selected sets cover every universe element exactly once.

\begin{problem}[$\Gamma$-Gap Exact Set Cover]\label{prob:cover}
Fix $\Gamma>1$. Given $S$ and an integer $1\le\tau\le r$, distinguish:
\begin{itemize}
\item YES: some $c\in\{0,1\}^r$ has $Sc=\mathbf1$ and $\|c\|_1\le\tau$;
\item NO: every ordinary cover uses more than $\Gamma\tau$ sets.
\end{itemize}
\end{problem}
For every fixed $\Gamma>1$, this promise problem is NP-hard under deterministic polynomial-time reductions~\cite[Definition 2.5 and Theorem 2.6]{mic12}. We take $\Gamma$ to be an integer.

\begin{lemma}[Support soundness over the integers and a field]\label{lem:cover}
Form
\begin{equation}\label{eq:cover-matrix}
 B_0=\begin{pmatrix}I_r\\ S\\ \vdots\\ S\end{pmatrix},
 \qquad t_0=\begin{pmatrix}\mathbf0\\ \mathbf1\\ \vdots\\ \mathbf1\end{pmatrix},
\end{equation}
with $\Gamma\tau+1$ copies in each lower stack. A YES instance has an indicator $c$ for which $B_0c-t_0$ is binary of weight at most $\tau$. In a NO instance,
\[
 |\supp(B_0z-wt_0)|>\Gamma\tau
\]
for every $z\in\Z^r$ and $w\in\Z$ with $w\ne0$. The same support bound holds for $z\in F^r$ and $0\ne w\in F$ over any field $F$.
\end{lemma}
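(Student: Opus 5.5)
The proof splits into the YES part, which is immediate, and the NO part, which is a case analysis on the support of $z$. The field case will use the same argument verbatim. Write $B_0z-wt_0$ blockwise: the top block is $z$ itself, and each of the $\Gamma\tau+1$ lower blocks equals $Sz-w\mathbf1$.

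For the YES part, take the indicator $c$ of an exact cover with $\|c\|_1\le\tau$. Then $Sc-\mathbf1=0$, so every lower block vanishes. Hence $B_0c-t_0=(c,0,\ldots,0)$ is binary of weight $\|c\|_1\le\tau$.

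For the NO part, fix $z$ and $w\ne0$ (over $\Z$ or over any field $F$), and let $J=\supp(z)$. The support of $B_0z-wt_0$ contains $J$ from the top block. It also contains $\Gamma\tau+1$ copies of $\supp(Sz-w\mathbf1)$ from the lower blocks. This gives two cases.

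\emph{Case 1: $Sz\ne w\mathbf1$.} Some universe coordinate is nonzero in each of the $\Gamma\tau+1$ lower copies. The support therefore has size at least $\Gamma\tau+1>\Gamma\tau$.

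\emph{Case 2: $Sz=w\mathbf1$.} Every universe element $i$ satisfies $\sum_j S_{ij}z_j=w\ne0$. Hence some $j\in J$ has $S_{ij}=1$, so $\{S_j:j\in J\}$ is an ordinary cover of the universe. By the NO promise, $|J|>\Gamma\tau$. Since the top block alone contributes $|J|$, the bound follows.

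Neither case uses any arithmetic beyond ``a sum of zeros is zero'' and the fact that $w\ne0$. So the argument holds over $\Z$ and over every field $F$, in particular $\F_q$. The only point needing care is Case 2: a nonzero $w$ forces every row of $S$ to meet $\supp(z)$, regardless of cancellations among the values $z_j$. This is why the NO promise is stated for ordinary covers, not exact ones. I do not expect any real obstacle beyond stating this support observation cleanly.
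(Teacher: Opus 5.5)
Your proposal is correct and follows the paper's proof. Both use the same YES residual $(c,\mathbf0,\ldots,\mathbf0)$ and the same two-case split on whether $Sz=w\mathbf1$, where $w\ne0$ forces $\supp(z)$ to index an ordinary cover in every characteristic.
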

\begin{proof}
For an exact-cover indicator $c$, we have $Sc=\mathbf1$, so the residual is $B_0c-t_0=(c,\mathbf0,\ldots,\mathbf0)^T$. Since $c\in\{0,1\}^r$ and $\|c\|_1\le\tau$, this residual is binary and has weight $\|B_0c-t_0\|_1=\|c\|_1\le\tau$, as required.

For soundness, first suppose $Sz-w\mathbf1\ne0$. Every repeated block then contains a nonzero entry, giving at least $\Gamma\tau+1$ nonzero coordinates. Otherwise $Sz=w\mathbf1$. Since $w\ne0$, every row has an incident set whose coefficient in $z$ is nonzero. These sets form an ordinary cover, and the NO promise gives $|\supp(z)|>\Gamma\tau$. Their coefficients need not be positive, and the argument is valid in every characteristic.
\end{proof}

We may repeat the entire pair $(B_0,t_0)$. If $B,t$ are obtained by stacking copies of this pair, let $s$ be $\tau$ times the number of copies. A YES residual is binary of weight at most $s$, while a NO instance satisfies
\begin{equation}\label{eq:cover-support}
 |\supp(Bz-wt)|>\Gamma s\qquad(w\ne0)
\end{equation}
over the integers and over every field. The number of copies will be chosen in the reduction below.

\subsection{The deterministic Reed--Solomon gadget}
Let $q$ be prime and let $k$ be an integer with $1\le k\le q/2$. The integer $k$ is the number of parity checks; its dependence on $q$ will be specified in Theorem~\ref{thm:gadget}. Index the coordinates by $a\in\F_q$ and define
\begin{equation}\label{eq:rs-definition}
 H_q(k)=(a^j)_{\substack{0\le j<k\\a\in\F_q}},
 \qquad
 L_{q,k}=\{z\in\Z^q:H_q(k)z\equiv0\pmod q\}.
\end{equation}
Thus $H_q(k)$ has $k$ rows and $q$ columns, and its first row is all ones, including at $a=0$.

\begin{lemma}[Reed--Solomon distance bounds]\label{lem:rs}
For $1\le k\le q/2$,
\[
 \lambda(L_{q,k})\ge2k,
 \qquad d_H(\ker_{\F_q}H_q(k))\ge k+1.
\]
\end{lemma}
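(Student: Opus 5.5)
The Hamming bound is the classical Reed--Solomon (MDS) argument, and we prove it first. Let $z\in\F_q^q$ be a nonzero word with $H_q(k)z=0$ and support $A\subseteq\F_q$. If $|A|\le k$, restrict $H_q(k)$ to the columns indexed by $A$ and keep the first $|A|$ rows. This gives a square Vandermonde matrix $(a^j)_{0\le j<|A|,\,a\in A}$ on distinct nodes, with the convention $0^0=1$ so that the node $a=0$ is allowed. The determinant $\prod_{a<b}(b-a)$ is nonzero in $\F_q$, so the restricted system forces $z|_A=0$, a contradiction. Hence $|A|\ge k+1$.

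For the integer bound, take $0\ne z\in L_{q,k}$ and split it into positive and negative parts. Let $P$ be the set of $a$ with $z_a>0$, with multiplicities $m_a=z_a$, and let $Q$ be the set with $z_a<0$, with multiplicities $n_a=-z_a$. Then
\[
\|z\|_1=\sum_{a\in P} m_a+\sum_{a\in Q} n_a .
\]
The congruences $H_q(k)z\equiv0$ state that the power sums $\sum_a m_a a^j$ and $\sum_a n_a a^j$ agree modulo $q$ for $0\le j<k$. Now form the polynomials
\[
f(X)=\prod_{a\in P}(1-aX)^{m_a},\qquad g(X)=\prod_{a\in Q}(1-aX)^{n_a}.
\]
The plan is to use Newton's identities over $\F_q$ in the logarithmic-derivative form: $-Xf'/f=\sum_{j\ge1}p_j X^j$, where the $p_j$ are the power sums. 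This converts equality of $p_1,\dots,p_{k-1}$ into $f\equiv g \pmod{X^k}$ in $\F_q[[X]]$. The conversion needs division by $j<k\le q/2<q$, which is legitimate. The $j=0$ congruence says that $\sum m_a\equiv\sum n_a\pmod q$.

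Next, split into cases according to the sizes of the multisets. First suppose both $\sum m_a$ and $\sum n_a$ are less than $k$. Then $f$ and $g$ have degree less than $k$, so $f\equiv g \pmod{X^k}$ gives $f=g$. Unique factorization in $\F_q[X]$ then gives equal multisets. Care is needed for the node $a=0$, which contributes a factor $1$ to $f$ and is invisible. For that node, compare the total counts: the $j=0$ congruence together with the sizes $<k<q$ forces equality of the total multiplicities. So the multiplicity of $0$ also matches, hence $z=0$, a contradiction. Therefore at least one side has total at least $k$.

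It remains to show that in fact both sides have total at least $k$. The $j=0$ congruence gives $\sum m_a\equiv\sum n_a\pmod q$. If one side is $<k$ and the other is $\ge k$, their difference is nonzero mod $q$ unless the larger side is at least $q+\text{(smaller side)}$. In that case $\|z\|_1\ge q\ge 2k$ already. Otherwise both sides are at least $k$, and $\|z\|_1\ge 2k$. Note that one side could be $0$, giving $\|z\|_1\ge q\ge 2k$, which is fine.

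The step I expect to be the main obstacle is the power-sum-to-polynomial transfer together with the careful treatment of the zero node and of multiplicities that reach $q$. Dividing by $j$ requires $j<q$. A multiplicity $m_a\ge q$ is not an obstruction, because then $\|z\|_1\ge q\ge2k$ immediately. So I would first dispose of every case with $\|z\|_1\ge q$, and then run the polynomial argument with all totals $<q$.
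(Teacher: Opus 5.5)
Your proof is correct and follows the paper's argument: Vandermonde minors give the Hamming bound, and the integer bound comes from the positive/negative multisets, the first parity check comparing their sizes modulo $q$, Newton's identities with divisions only by integers below $q$, and unique factorization in $\F_q[X]$. The differences are cosmetic. The paper assumes $\|z\|_1<2k\le q$ at the outset, so both sizes equal some $v<k$ and it can use the monic polynomials $\prod(X-a)$, which see the node $a=0$ directly. Your reciprocal polynomials $\prod(1-aX)^{m_a}$ instead require the separate zero-node bookkeeping and the three-way case split on the totals.
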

\begin{proof}
Any set of at most $k$ columns is independent, by the Vandermonde determinant applied to its first as many rows. This proves the Hamming bound.

For the integer bound, suppose $0\ne z\in L_{q,k}$ has $\|z\|_1<2k\le q$. Form the multisets $S_+$ and $S_-$ of field elements appearing at positive and negative coordinates of $z$, respectively, with multiplicities $|z_a|$. The first parity check says that their cardinalities differ by a multiple of $q$. Their total is less than $q$, so their cardinalities are equal, say to $v<k$. If $v=0$, then $z=0$, a contradiction. Otherwise the other parity checks give equality of their power sums through degree $v$, since $v\le k-1$. Newton's identities recover their elementary symmetric functions from these power sums: the only divisions are by $1,\ldots,v$, which are nonzero in $\F_q$. Therefore
\[
 \prod_{a\in S_+}(X-a)=\prod_{a\in S_-}(X-a)
 \qquad\text{in }\F_q[X].
\]
This is equality of polynomial coefficients modulo $q$, with roots counted according to their multiplicities. Unique factorization in $\F_q[X]$ implies that the two root multisets coincide. But coordinates are indexed by distinct field elements, and the positive and negative coordinate supports are disjoint, so this is impossible when $v>0$. This proves the integer estimate, also given in~\cite[Theorem 14]{bp}.
\end{proof}

\Needspace{20\baselineskip}
\begin{samepage}
\begin{theorem}[Binary projection theorem {\cite{wan}}]\label{thm:gadget}
Fix rational constants
\begin{equation}\label{eq:gadget-exponents}
 \frac12<\alpha<1,\qquad 0<\delta<\xi<\frac{2\alpha-1}{4\alpha}.
\end{equation}
For every sufficiently large prime $q$, put
\begin{equation}\label{eq:gadget-parameters}
 k=\left\lfloor\frac{q^\xi}{2}\right\rfloor,
 \qquad h=\lfloor2\alpha k\rfloor,
 \qquad r=\lfloor q^\delta\rfloor.
\end{equation}
In deterministic $\operatorname{poly}(q)$ time one can construct an integer basis $A$ of $L_{q,k}$, an integer vector $x\in\Z^q$, and a coordinate projection $P:\Z^q\to\Z^r$ such that, for every $c\in\{0,1\}^r$, there is $z\in\Z^q$ with
\begin{equation}\label{eq:binary-projection}
 v=x+Az\in\{0,1\}^q,\qquad Pv=c,\qquad \|v\|_1=h.
\end{equation}
In particular, $\lambda(L_{q,k})\ge2k$ and $h\le2\alpha k$. All output lengths are polynomial in $q$. The lower cutoff on $q$ is effective and depends only on the fixed constants.
\end{theorem}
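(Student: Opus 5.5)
The plan is to reduce the coset statement to a counting problem about subsets of $\F_q$ with prescribed power sums, and to solve that counting problem with character sums, the Weil bound, and the Li--Wan distinct-coordinate sieve. We follow the strategy of~\cite{wan}.

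\emph{Reduction to a subset problem.} The matrix $A$ will be the deterministic integer basis of $L_{q,k}$ obtained from $q\Z^q$ together with integer lifts of an $\F_q$-basis of $\ker H_q(k)$, via a Hermite normal form computation. For $P$ we take the projection onto a fixed set $R\subseteq\F_q$ of $r$ coordinates, say $R=\{1,\ldots,r\}$. For $x$ we take the indicator of a fixed $h$-subset $T_0\subseteq\F_q\setminus R$. A binary vector $v$ lies in $x+L_{q,k}$ exactly when $H_q(k)v=H_q(k)x$ modulo $q$. Writing $v=\mathbf 1_T$, this says that the power sums $p_j(T)=\sum_{a\in T}a^j$ agree with $p_j(T_0)$ for $0\le j<k$. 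So, for each $c\in\{0,1\}^r$ with support $C\subseteq R$, it suffices to find a set $T'\subseteq\F_q\setminus R$ with $|T'|=m:=h-|C|$ such that
\[
 p_j(T')=p_j(T_0)-p_j(C)\quad(1\le j<k).
\]
The $j=0$ condition then holds automatically, since $|T'\cup C|=h$. Note that $m\ge h-r$, and $r=q^{\delta}=o(k)$ because $\delta<\xi$.

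\emph{Counting.} Fix the target vector $b\in\F_q^{k-1}$ given by the right-hand sides. We count ordered $m$-tuples of distinct elements of $D=\F_q\setminus R$ satisfying these equations, by summing over additive characters $\psi$ of $\F_q^{k-1}$. The trivial character contributes the main term, roughly $(q-r)_m/q^{k-1}$. Each nontrivial character corresponds to a nonzero polynomial $f$ of degree at most $k-1$ and no constant term. For such $f$, the Weil bound together with trivial bounds on $R$ gives
\[
 \Bigl|\sum_{a\in D}\psi_0(f(a))\Bigr|\le k\sqrt q+r .
\]
The Li--Wan sieve for sums over distinct coordinates converts this into a bound on the distinct-tuple sum. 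For each such character, the distinct-tuple sum is bounded by roughly $(k\sqrt q+r+m)_m$, up to the usual falling-factorial form. It therefore suffices to prove the inequality
\[
 \frac{(q-r)_m}{q^{k-1}}>\bigl(k\sqrt q+r+m\bigr)_m .
\]
Since $m\approx2\alpha k$ and $k\approx q^\xi/2$, this reduces to $q^{m-k+1}\gtrsim (Cq^{1/2+\xi})^{m}$. Taking logarithms, this is $(\tfrac12-\xi)2\alpha>1$ up to $o(1)$ terms. That is exactly the condition $\xi<(2\alpha-1)/(4\alpha)$ in \eqref{eq:gadget-exponents}. The slack from the strict inequality, together with $m=2\alpha k-O(q^\delta)$, absorbs the $o(1)$ losses for all $q$ beyond an effective threshold.

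\emph{Remaining assertions and main obstacle.} The existence of $T'$ gives $v=\mathbf 1_{T'\cup C}$, so $v=x+Az$ for some integer $z$, with $Pv=c$ and $\|v\|_1=h\le2\alpha k$. The bound $\lambda(L_{q,k})\ge2k$ is Lemma~\ref{lem:rs}, which applies because $k\le q/2$. Polynomial running time and output length are clear, since only $A$, $x$ and $P$ are constructed; $z$ is only asserted to exist. The hard step is the sieve estimate. One must apply the Li--Wan bound carefully enough that the error really is of size $(k\sqrt q)^m$ rather than carrying an extra $m!$-type factor, and do so uniformly over all $2^r$ patterns and the shifted domain $D$. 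I would first try the symmetric-group form of the sieve, bounding by the number of permutations weighted by $(k\sqrt q+r)^{\#\text{cycles}}$. Cycles of length divisible by $p$ need separate handling, but this cannot occur here because $m<q$.
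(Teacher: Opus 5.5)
Your reduction to a power-sum problem is sound: the choice of $A$, $x=\mathbf 1_{T_0}$ and $P$, the automatic $j=0$ equation, and the appeal to Lemma~\ref{lem:rs} for $\lambda(L_{q,k})\ge2k$ are all fine. The counting step fails, however, and not for the bookkeeping reason you flag.

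Your sufficient inequality $(q-r)_m/q^{k-1}>(k\sqrt q+r+m)_m$ does reduce, as you say, to $2\alpha(\tfrac12-\xi)>1$. That is not the hypothesis. The condition $\xi<(2\alpha-1)/(4\alpha)$ is equivalent to $2\alpha(1-2\xi)>1$, that is, $2\alpha(\tfrac12-\xi)>\tfrac12$, so you are off by a factor of two. Your condition reads $\alpha(1-2\xi)>1$, which fails for every $\xi>0$ because $\alpha<1$. For instance, with $\alpha=0.9$ and $\xi=0.1$, your main term is about $q^{0.8k}$ and your error bound is about $q^{1.08k}$.

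The failure is structural. The one-variable Weil bound saves only $\sqrt q$ per chosen element, and you bound each of the $q^{k-1}$ nontrivial characters by its worst case. So the error is about $k^mq^{m/2}$, and beating the main term $q^{m-k+1}$ needs $m>2(k-1)/(1-2\xi)$, which exceeds $2k$ for large $q$. That is exactly the regime the gadget must avoid: the NO bound is $\lambda(L_{q,k})\ge2k$, and the whole point is $h<2k$. Sharpening the sieve cannot help. Avoiding an $m!$ loss, uniformity over the $2^r$ patterns, or the punctured domain $D$ are all beside the point, since even the exact permutation bound $\sum_{\tau}w^{\#\mathrm{cycles}(\tau)}$ already fails.

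What is missing is a further saving of about $q^{(k-1)/2}$. The hypothesis is calibrated to an error of size roughly $(2k)^m q^{(m-k+1)/2}\le q^{\xi m+(m-k+1)/2}$. This is square-root cancellation relative to the dimension $m-k+1$ of the variety of $m$-tuples with prescribed power sums $p_1,\dots,p_{k-1}$, rather than relative to $m$. Against the main term $q^{m-k+1}$, it requires $m(1-2\xi)>k-1$, which for $m\approx2\alpha k$ is exactly $\xi<(2\alpha-1)/(4\alpha)$.

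An estimate of that shape is what a Deligne-type bound for this complete intersection would supply; it is smooth at infinity when $m<q$, since its Jacobian is of Vandermonde type. After that, distinctness of the coordinates and avoidance of $R$ would still have to be handled.

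Note also that the paper does not reprove this theorem. It imports it from Wan's earlier paper (Theorem 7.3 there and the binary-surjectivity proof preceding it) and supplies only the explicit basis $A$. So your proposal is not a variant of an argument in the paper. As written, its central estimate is false, and it would have to be rebuilt around a variety-level bound of the kind just described.
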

\end{samepage}

This is the parameter form of~\cite[Theorem 7.3 and its preceding binary-surjectivity proof]{wan}. In that paper, take $\epsilon=2\alpha-1$ and $\epsilon_1=\xi$. Its preceding proof establishes the projection property on the constant-weight binary subset itself. The additional bounds on the projection dimension have positive constant multiples of $k$ on their right sides; because $\delta<\xi$, they hold for all sufficiently large $q$. In particular, $\xi$ can be any sufficiently small positive rational. This freedom is used in Section~\ref{sec:hardness}.

The matrix $H_q(k)$ is the parity-check matrix defining $L_{q,k}$; the matrix $A$ is an integer basis matrix for this lattice. Constructing $A$ needs only finite-field linear algebra. Split $H_q(k)=[H_0\mid H_1]$ with $H_0$ an invertible $k$-column submatrix over $\F_q$, and choose an integer lift $V$ of $H_0^{-1}H_1$. In that coordinate order, take
\begin{equation}\label{eq:gadget-basis}
 A=\begin{pmatrix}qI_k&-V\\0&I_{q-k}\end{pmatrix}.
\end{equation}
To verify the basis claim, write an integer vector in two blocks as $(u,v)^T$, where $u\in\Z^k$ and $v\in\Z^{q-k}$. Multiplication by $H_0^{-1}$ over $\F_q$ gives
\[
 \begin{pmatrix}u\\v\end{pmatrix}\in L_{q,k}
 \quad\Longleftrightarrow\quad H_0u+H_1v\equiv0\pmod q
 \quad\Longleftrightarrow\quad u+Vv\equiv0\pmod q.
\]
The last congruence means $u+Vv=qz$ for some $z\in\Z^k$, and then
\[
 \begin{pmatrix}u\\v\end{pmatrix}
 =\begin{pmatrix}qz-Vv\\v\end{pmatrix}
 =A\begin{pmatrix}z\\v\end{pmatrix}.
\]
Conversely, every such integer combination satisfies the parity checks. Thus $L_{q,k}=A\Z^q$, and $\det A=q^k\ne0$ proves that the columns form a basis. In particular, $H_q(k)A\equiv0\pmod q$. Choosing the entries of $V$ among $0,\ldots,q-1$ gives $O(\log q)$-bit entries in $A$. The substantive external input is the binary coset projection. Any smaller collection of projected coordinates has the same property: extend the prescribed shorter binary vector to the full projection dimension, and apply \eqref{eq:binary-projection}.

\subsection{Combining the source and the gadget}
\begin{theorem}[Binary gap with modular soundness]\label{thm:base}
Fix rational $0<\varepsilon<1$. There is a deterministic polynomial-time reduction from an NP-hard promise problem to a lattice $L=L(\cB)\subseteq\Z^N$, a prime $q$, and $k=\lfloor q^\xi/2\rfloor\ge2$ such that
\begin{equation}\label{eq:base-gap}
\begin{aligned}
 \YES:&\quad \text{some }0\ne y\in L\cap\{0,1\}^N
                       \text{ has }\|y\|_1<(1+\varepsilon)k,\\
 \NO:&\quad \lambda(L)\ge2k,\qquad d_H(L\bmod q)\ge k+1.
\end{aligned}
\end{equation}
For any prescribed fixed $C,M>0$, the exponent $\xi$ can also be chosen to ensure $q>Ck^M$.
\end{theorem}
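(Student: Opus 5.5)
We will follow the standard Micciancio-type embedding, using the gadget of Theorem~\ref{thm:gadget} to realize the binary cover vector $c$ and Lemma~\ref{lem:cover} to force a large support.

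\textbf{Parameters.} Given $\varepsilon$, we first choose $\alpha$ with $2\alpha<1+\varepsilon$, say $\alpha=(1+\varepsilon/2)/2$. We then take $\xi$ small enough, with $0<\delta<\xi<(2\alpha-1)/(4\alpha)$, that $q>Ck^M$ holds for large $q$; since $k\approx q^\xi/2$, this needs $\xi M<1$. Next we take the integer $\Gamma$ large, say $\Gamma\ge 4$. Given a Gap Exact Set Cover instance $(S,\tau)$ with $r_0$ sets, we choose the number of stacked copies $m$ of $(B_0,t_0)$ so that $s=m\tau$ is a fixed small multiple of $k$. Concretely, we want $s\le \varepsilon k/2$ and $\Gamma s\ge 2k$, which the integer choice $\Gamma\ge 4/\varepsilon+1$ allows. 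We pick a prime $q$ (found deterministically by trial division in time $\mathrm{poly}$) that is the smallest prime with $\lfloor q^\delta\rfloor\ge r_0$ and that also admits a valid $m$. Since $k$ grows polynomially and $\tau\le r_0\le q^\delta\ll k$, some $m$ with $\varepsilon k/4\le m\tau\le\varepsilon k/2$ exists. Everything is then polynomial in the instance size.

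\textbf{Construction.} Let $A,x,P$ come from Theorem~\ref{thm:gadget}, with $P$ restricted to $r_0$ coordinates. We define $\cB$ as the basis of the lattice generated by the columns
\[
\begin{pmatrix} A & x & 0\\ BP' A & BP'x - t & \cdots\end{pmatrix},
\]
where the second block is the stacked cover map applied to the projected coordinates. More cleanly, we take $L=\{(Az+wx,\ B\,P(Az+wx)-wt):z\in\Z^q,w\in\Z\}$, with basis columns $(A_i, BPA_i)$ and $(x,BPx-t)$. If needed we add a padding argument to ensure full column rank. Since $x\notin L_{q,k}$ in general, independence follows; if $x\in L_{q,k}$ we take the coset representative accordingly, which does not matter for rank.

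\textbf{YES.} Take $c$ an exact cover and $v=x+Az$ with $Pv=c$ and $\|v\|_1=h$. Then $y=(v,Bc-t)$ is binary with $\|y\|_1\le h+s\le 2\alpha k+\varepsilon k/2<(1+\varepsilon)k$.

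\textbf{NO, integer case.} Consider a nonzero lattice vector. If $w=0$, the first block is a nonzero vector of $L_{q,k}$, or else it is zero and then the whole vector is zero. Lemma~\ref{lem:rs} gives $\ell_1$ norm at least $2k$. If $w\ne0$, then \eqref{eq:cover-support} gives more than $\Gamma s\ge2k$ nonzero integer coordinates in the second block, so again the $\ell_1$ norm is at least $2k$.

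\textbf{NO, modular case.} The same argument runs over $\F_q$. If $w\equiv0$, the first block lies in $\ker H_q(k)$ and has weight at least $k+1$. If $w\not\equiv0$, the support in the second block exceeds $\Gamma s\ge 2k\ge k+1$. This works because Lemma~\ref{lem:cover} holds over any field and $P(\cdot)$ mod $q$ is just some vector of $\F_q^{r_0}$.

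The main obstacle is the simultaneous parameter bookkeeping. We need $k\ge2$, $h\le 2\alpha k$, the inequalities $s\le\varepsilon k/2$ and $\Gamma s\ge2k$ with integer $m$, $r_0\le q^\delta$, and $q>Ck^M$ all at once. We also need to confirm that the NO promise $\Gamma\tau$ is used with $\Gamma$ fixed before $q$. We handle this by choosing $\Gamma$ depending only on $\varepsilon$, then $\xi$ depending on $C,M,\alpha$, and finally $q$ polynomial in the instance size.
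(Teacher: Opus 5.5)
Your construction and case analysis are the paper's own proof. You use the same block lattice (with the gadget block placed on top), the same split on $w$ over $\Z$ and over $\F_q$ via Lemma~\ref{lem:rs} and the field form of the cover support bound, and the same device $\xi M<1$ for $q>Ck^M$. The soundness reasoning is fine. However, the bookkeeping you singled out as the main obstacle fails as written in two places.

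\emph{The YES bound is not strict.} With $\alpha=(1+\varepsilon/2)/2$ one has $2\alpha k+\varepsilon k/2=(1+\varepsilon)k$ exactly. Three things can happen at once: $h=\lfloor2\alpha k\rfloor$ may equal $2\alpha k$, your rule allows $m\tau=\varepsilon k/2$, and an exact cover may have $\|c\|_1=\tau$. In that case the witness has $\ell_1$ norm exactly $(1+\varepsilon)k$, which violates the required strict inequality.

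\emph{The window does not give $\Gamma s\ge2k$.} With $\Gamma\ge4/\varepsilon+1$, the window $\varepsilon k/4\le m\tau\le\varepsilon k/2$ only gives $\Gamma s\ge k+\varepsilon k/4<2k$. That is still enough for the modular bound $k+1$. It breaks the $w\ne0$ branch of $\lambda(L)\ge2k$.

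Both slips are repaired by choosing $m$ with $2k/\Gamma\le m\tau<\varepsilon k/2$. This half-open interval has length at least $\varepsilon^2k/(2(4+\varepsilon))$, which exceeds $\tau\le q^\delta$ for large $q$ because $\delta<\xi$. The paper avoids the issue differently: it takes $\alpha=\tfrac12+\tfrac\varepsilon8$, $\Gamma\ge8/\varepsilon$, $k\ge4\tau/\varepsilon$, and $s=\tau\lceil2k/(\Gamma\tau)\rceil$. Then $\Gamma s\ge2k$, $s\le2k/\Gamma+\tau\le\varepsilon k/2$, and $\|y\|_1\le(1+\tfrac{3\varepsilon}4)k$.

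\emph{Your full-column-rank justification is wrong, although the claim is true.} Whether or not $x\in L_{q,k}$, the matrix $A$ is a nonsingular $q\times q$ matrix. So $x$ always lies in the real span of its columns, and the gadget block $[A\mid x]$ alone is always linearly dependent. Independence comes from the cover block instead. If $Az+wx=0$, that block reduces to $-wt$, and $t\ne0$ (the universe is nonempty) forces $w=0$, hence $z=0$. No padding is needed.

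\emph{The prime-selection rule needs more cutoffs.} It should also impose the fixed cutoff of Theorem~\ref{thm:gadget} and those ensuring $k\ge2$ and $q>Ck^M$. With these added, every prime beyond a bound polynomial in $r_0$ is admissible, so the smallest admissible prime can still be found by trial division in polynomial time.
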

\begin{proof}
Choose
\[
 \alpha=\frac12+\frac\varepsilon8,
 \qquad \Gamma\in\Z,\quad\Gamma\ge\frac8\varepsilon,
 \qquad\delta=\frac\xi2,
\]
where $\xi>0$ is rational and satisfies \eqref{eq:gadget-exponents}. Start with a $\Gamma$-Gap Exact Set Cover instance with $r$ sets. Choose a sufficiently large polynomial-size prime $q$ for which
\begin{equation}\label{eq:base-balance}
 \lfloor q^\delta\rfloor\ge r,
 \qquad k\ge\frac{4\tau}{\varepsilon},
 \qquad k\ge2,
 \qquad2k\le q.
\end{equation}
The prime choice is justified at the end of the proof. Retain $r$ coordinates of the gadget projection and call the resulting map $P:\Z^q\to\Z^r$.

Form $B_0,t_0$ as in Lemma~\ref{lem:cover}. Set
\begin{equation}\label{eq:base-repetition}
 s=\tau\left\lceil\frac{2k}{\Gamma\tau}\right\rceil,
 \qquad\Gamma s\ge2k,
 \qquad s\le\frac{2k}{\Gamma}+\tau\le\frac{\varepsilon k}{2}.
\end{equation}
The last inequality uses $2/\Gamma\le\varepsilon/4$ and \eqref{eq:base-balance}. Stack $s/\tau=\lceil2k/(\Gamma\tau)\rceil$ copies of $B_0,t_0$ to obtain $B,t$. Output
\begin{equation}\label{eq:block-basis}
 \cB=\begin{pmatrix}BPA&BPx-t\\ A&x\end{pmatrix},
 \qquad L=L(\cB).
\end{equation}
This matrix has $q+1$ independent columns. Indeed, if $\cB(z,w)^T=0$, its lower block gives $Az+wx=0$. Substituting into the upper block gives $-wt=0$. The universe is nonempty, so $t\ne0$ and $w=0$; the invertibility of $A$ then gives $z=0$.

For completeness, let $c$ be an exact-cover indicator with $\|c\|_1\le\tau$. The gadget supplies a binary vector $v=x+Az$ of weight $h$ with $Pv=c$. Hence
\[
 y=\cB\begin{pmatrix}z\\1\end{pmatrix}
   =\begin{pmatrix}Bc-t\\v\end{pmatrix}
\]
is binary and nonzero. Moreover,
\[
 \|y\|_1\le s+h
 \le\frac{\varepsilon k}{2}+2\alpha k
 =\left(1+\frac{3\varepsilon}{4}\right)k
 <(1+\varepsilon)k.
\]
The reduction only needs the existence of this witness.

For soundness, write an arbitrary nonzero lattice vector as
\begin{equation}\label{eq:block-vector}
 \cB\begin{pmatrix}z\\w\end{pmatrix}
 =\begin{pmatrix}BPv-wt\\v\end{pmatrix},
 \qquad v=Az+wx.
\end{equation}
If $w=0$, then $z\ne0$ and $v=Az$ is a nonzero vector of $L_{q,k}$. Lemma~\ref{lem:rs} bounds its $\ell_1$ norm below by $2k$. If $w\ne0$, \eqref{eq:cover-support}, applied with the coefficient vector $Pv$, gives more than $\Gamma s\ge2k$ nonzero coordinates in the upper block. Each nonzero integer coordinate contributes at least one to the $\ell_1$ norm. Thus $\lambda(L)\ge2k$.

For the modular bound, reduce \eqref{eq:block-vector} modulo $q$ and suppose its residue is nonzero. If $w\equiv0\pmod q$, then $v\equiv Az\pmod q$. Since every column of $A$ lies in $L_{q,k}$, we have
\[
 H_q(k)v\equiv H_q(k)Az\equiv0\pmod q,
 \qquad v\bmod q\in\ker_{\F_q}H_q(k).
\]
This residue is nonzero: if $v\equiv0\pmod q$, the upper block $BPv-wt$ would vanish modulo $q$ as well, contrary to the chosen nonzero residue of the entire vector. Lemma~\ref{lem:rs} therefore supplies at least $k+1$ nonzero coordinates in the lower block. If $w\not\equiv0\pmod q$, the field version of \eqref{eq:cover-support} gives more than $\Gamma s\ge2k$ nonzero coordinates in the upper block. This proves $d_H(L\bmod q)\ge k+1$.

It remains to select $q$ effectively. All constants and exponents are fixed before the input is processed. Let $Q_0$ be the larger of a sufficiently large fixed cutoff and the ceiling of a sufficiently large fixed multiple of $(r+1)^{1/\delta}$. Bertrand's theorem provides a prime in $[Q_0,2Q_0]$. Scanning this interval and testing primality deterministically takes polynomial time in $r$; even trial division suffices because $Q_0$ is polynomial in $r$. Since $\xi=2\delta$, the resulting $k$ grows quadratically in $r+1$ outside the fixed cutoff. This ensures \eqref{eq:base-balance}, using $\tau\le r$. The cutoff covers the smaller inputs. Floors and ceilings of the fixed rational powers can be computed by exact integer power comparisons.

To impose $q>Ck^M$, choose $\xi$ still smaller so that $\xi M<1$, and increase the fixed cutoff. Then $k\le q^\xi$ gives $Ck^M\le Cq^{\xi M}<q$. The number of repetitions, the dimensions of \eqref{eq:block-basis}, and its entry bit lengths remain polynomial in the source encoding length.
\end{proof}

\begin{remark}\label{rem:binary-interface}
The one-copy binary $\ell_1$ gap can be made arbitrarily close to two: the NO lower bound is $2k$, and the YES weight is less than $(1+\varepsilon)k$. For Euclidean lengths, the YES witness satisfies $\|y\|_2^2=\|y\|_1<(1+\varepsilon)k$, while every nonzero NO vector satisfies $\|z\|_2^2\ge\|z\|_1\ge2k$. Thus these bounds give the Euclidean ratio
\[
 \sqrt{\frac{2}{1+\varepsilon}},
\]
which approaches $\sqrt2$. This recovers the range of fixed Euclidean factors below $\sqrt2$ in~\cite{wan}. The improvement to factors below two comes from the tensor square in the next section. Further amplification also uses the modular bound and the freedom to make $\xi$ small; both are properties of the construction just proved.
\end{remark}

\section{Two Euclidean tensor estimates}\label{sec:euclidean}
The integer $\ell_1$ bound will be used once, to obtain a Euclidean bound for the tensor square. All later amplification uses Euclidean minima and modular Hamming distance.

\begin{lemma}[Two-factor Euclidean bound]\label{lem:seed}
For nonzero integer-coordinate lattices $U,V$,
\begin{equation}\label{eq:seed}
 \lambda_2(U\otimes V)^2\ge\lambda(U)\lambda(V).
\end{equation}
\end{lemma}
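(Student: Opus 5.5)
We reshape a nonzero tensor vector as the matrix $W=BZC^T$ of \eqref{eq:matrix-form}. Here $B$ and $C$ are bases of $U$ and $V$, so $W\ne0$, every column of $W$ lies in $U$, and every transposed row lies in $V$. The squared Euclidean norm of the vector is $\|W\|_F^2=\sum_{i,j}W_{ij}^2$. The plan is to bound this sum below by grouping by rows and by columns and combining the two counts.

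First, count nonzero columns. Let $J$ be the set of nonzero columns. Each such column is a nonzero vector of $U$, so $\|W_{\cdot j}\|_1\ge\lambda(U)$. Integrality gives $\|W_{\cdot j}\|_2^2\ge\|W_{\cdot j}\|_1$ by \eqref{eq:integer-norms}. Hence $\|W\|_F^2\ge|J|\lambda(U)$.

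Next, bound $|J|$ from below. Take any nonzero row $i$; one exists since $W\ne0$. Its transpose is a nonzero vector of $V$. Its entries are nonzero only in columns belonging to $J$, and the row has $\ell_1$ norm at least $\lambda(V)$. This alone gives only $|J|\ge1$, not $|J|\ge\lambda(V)$, so a finer argument is needed. This is the main obstacle: we must bound $\sum_{i,j}W_{ij}^2$, not the support size.

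The fix is a weighted double count. For each nonzero column $j$ put $c_j=\|W_{\cdot j}\|_1\ge\lambda(U)$. For each nonzero row $i$ put $r_i=\|W_{i\cdot}\|_1\ge\lambda(V)$. For integers, $W_{ij}^2\ge|W_{ij}|$, and we can do better using $W_{ij}^2\ge|W_{ij}|\cdot|W_{ij}|$.

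We use the AM–GM trick in the form $W_{ij}^2\ge|W_{ij}|^2$. Write
\[
 \|W\|_F^2=\sum_{i,j}|W_{ij}|^2 .
\]
We aim to show $\|W\|_F^4\ge\lambda(U)\lambda(V)\cdot\|W\|_F^2$ in a suitable form. Instead, we prove directly
\[
 \|W\|_F^2\ge\sum_{i,j}|W_{ij}|\ge\max\bigl(|J|\lambda(U),\,|I|\lambda(V)\bigr),
\]
where $I$ is the set of nonzero rows. We also have $\|W\|_F^2\ge|\supp W|$. The key combinatorial fact is $|\supp W|\ge$ something like $\lambda(V)$ per nonzero row only in $\ell_1$ terms, so pure support counting is insufficient.

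Cauchy–Schwarz supplies the missing link. For each nonzero row $i$,
\[
 r_i^2\le|\supp W_{i\cdot}|\,\|W_{i\cdot}\|_2^2\le|J|\,\|W_{i\cdot}\|_2^2 .
\]
Since $r_i\ge\lambda(V)$ and, by integrality, $r_i\le\|W_{i\cdot}\|_2^2$, this yields
\[
 \|W_{i\cdot}\|_2^2\ge\frac{r_i^2}{|J|}\ge\frac{\lambda(V)\,r_i}{|J|}.
\]
Summing over $i$ gives
\[
 \|W\|_F^2\ge\frac{\lambda(V)}{|J|}\sum_i r_i=\frac{\lambda(V)}{|J|}\sum_j c_j\ge\frac{\lambda(V)}{|J|}\,|J|\,\lambda(U)=\lambda(U)\lambda(V).
\]
This is \eqref{eq:seed}. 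The only delicate point is the step $r_i^2\le|J|\,\|W_{i\cdot}\|_2^2$, which holds because every nonzero row is supported inside $J$. Everything else is the row/column membership from \eqref{eq:matrix-form} together with integrality.
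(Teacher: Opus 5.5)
Your final chain is correct and is essentially the paper's argument. You apply Cauchy--Schwarz row by row, $r_i^2\le|J|\,\|W_{i\cdot}\|_2^2$, then use $r_i^2\ge\lambda(V)r_i$ and $\sum_j c_j\ge|J|\lambda(U)$; the paper applies Cauchy--Schwarz once over the whole $a\times b$ active rectangle, $S^2\le ab\|W\|_F^2$, and combines it with $S\ge a\lambda(V)$ and $S\ge b\lambda(U)$. The exploratory paragraphs before ``Cauchy--Schwarz supplies the missing link'' and the unused remark $r_i\le\|W_{i\cdot}\|_2^2$ can be deleted, since integrality enters only through the row and column membership in \eqref{eq:matrix-form}.
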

\begin{proof}
Reshape $0\ne W\in U\otimes V$ as in \eqref{eq:matrix-form}. Let $a,b$ be its numbers of nonzero rows and columns, and put $S=\sum_{i,j}|W_{ij}|$. Every nonzero row has $\ell_1$ norm at least $\lambda(V)$; every nonzero column has $\ell_1$ norm at least $\lambda(U)$. Consequently,
\[
 S\ge a\lambda(V),\qquad S\ge b\lambda(U),
 \qquad S^2\ge ab\lambda(U)\lambda(V).
\]
The support of $W$ is contained in the rectangle of its $a$ active rows and $b$ active columns. Cauchy--Schwarz on these $ab$ entries gives $S^2\le ab\|W\|_F^2$. Dividing by $ab>0$ proves the bound for every nonzero $W$.
\end{proof}

For a NO lattice in Theorem~\ref{thm:base}, take $U=L^{\otimes2}$ and $D=k^2$. Lemmas~\ref{lem:seed} and~\ref{lem:hamming} give
\begin{equation}\label{eq:seed-scale}
 \lambda_2(U)^2\ge4D,
 \qquad d_H(U\bmod q)\ge(k+1)^2>D.
\end{equation}
\begin{remark*}[Preliminary Euclidean hardness below two]
The first inequality in \eqref{eq:seed-scale} already improves the Euclidean hardness range to every fixed approximation factor $1\le\rho<2$. Indeed, a NO instance satisfies $\lambda_2(L^{\otimes2})\ge2k$. In a YES instance, the binary witness has
\[
 \|y\otimes y\|_2=\|y\|_2^2=\|y\|_1<(1+\varepsilon)k.
\]
Given such a fixed $\rho$, choose rational $0<\varepsilon<1$ with $\rho(1+\varepsilon)<2$. Output the tensor-square basis $\cB^{\otimes2}$ and the rational threshold $d=(1+\varepsilon)k$. The YES minimum is less than $d$, and the NO minimum is at least $2k>\rho d$. Taking a tensor square preserves polynomial encoding length and deterministic polynomial construction time. Hence this gives deterministic NP-hardness for every fixed factor below $2$. The endpoint $2$ does not follow from these bounds alone; the subsequent amplification gives that endpoint and all larger fixed factors.
\end{remark*}
The stronger Hamming bound is available throughout. We normalize by $k^2$ because it gives the integer Euclidean coefficient $4$. Normalizing instead by $(k+1)^2$ would give the coefficient $4k^2/(k+1)^2<4$. Keeping the two scales separate could improve finite-stage bounds, but is unnecessary for arbitrary constant amplification.

We next estimate a tensor of small real matrix rank. We first need a factorization whose factors remain in the given lattices.

\begin{lemma}[Integer rank factorization]\label{lem:factorization}
If $0\ne W\in U\otimes V$ has real rank $r$, then $W=XY^T$, where the columns of $X$ and $Y$ are bases of rank-$r$ sublattices of $U$ and $V$, respectively.
\end{lemma}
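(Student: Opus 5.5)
We work with the matrix form \eqref{eq:matrix-form}, $W=BZC^T$ with $Z\in\Z^{n\times m}$, where $B$ and $C$ are bases of $U$ and $V$. Since $B$ and $C$ have full column rank, $\rank_\R W=\rank_\R Z=r$. The plan is to factor the integer matrix $Z$ through integer matrices of rank $r$, and then push the factors through $B$ and $C$.

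First, take the column space of $Z$ over $\mathbb Q$. Intersect it with $\Z^n$ to get a saturated sublattice $\Lambda\subseteq\Z^n$ of rank $r$. Choose a basis $G\in\Z^{n\times r}$ of $\Lambda$; this exists because a subgroup of a free abelian group is free of rank equal to its $\mathbb Q$-dimension. Every column of $Z$ is an integer vector in the rational span of $G$, so it lies in $\Lambda$. Hence $Z=GH^T$ for a unique $H\in\Z^{m\times r}$. The matrix $H$ has rank $r$, since otherwise $\rank Z<r$.

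Next, put $X=BG\in\Z^{N\times r}$ and $Y=CH\in\Z^{M\times r}$, so that $W=BGH^TC^T=XY^T$. Each column of $X$ is an integer combination of columns of $B$, so it lies in $U$. Since $B$ is injective on $\R^n$ and $G$ has full column rank $r$, $X$ has full column rank $r$. Its columns are therefore a basis of the rank-$r$ sublattice $X\Z^r\subseteq U$. The same argument with $C$ and $H$ applies to $Y$ and $V$.

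The only step needing care is the existence of an integer factorization $Z=GH^T$ with both factors of rank exactly $r$. A naive rational rank factorization may have non-integral factors, which is why we use the saturation $\Lambda$. Alternatively, one can use the Smith normal form $Z=P\,\diag(d_1,\ldots,d_r,0,\ldots)\,Q$ with $P,Q$ unimodular. We then take $G$ to be the first $r$ columns of $P$, and $H^T$ to be $\diag(d_1,\ldots,d_r)$ times the first $r$ rows of $Q$. Both factors are integral of rank $r$, so this gives the factorization directly.
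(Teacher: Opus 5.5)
Your proof is correct and is essentially the paper's argument. Both show $\rank Z=r$, factor $Z$ over the integers through a free $\Z$-module of rank $r$, and then set $X=BG$, $Y=CH$ (the paper writes $X=BP$, $Y=CQ^T$). The only difference is cosmetic: the paper uses a $\Z$-basis of the subgroup generated by the rows of $Z$, while you use a basis of the saturated column lattice, or alternatively the Smith normal form. Saturation is not actually needed, since a basis of the subgroup generated by the columns of $Z$ would already give integral factors of rank $r$.
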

\begin{proof}
Write $W=BZC^T$ with full-column-rank lattice bases $B,C$. Left inverses for $B,C$ over $\mathbb Q$ show that $\rank Z=\rank W=r$. The subgroup of $\Z^m$ generated by the rows of $Z$ has a $\Z$-basis of $r$ vectors. Put these basis vectors into the rows of an integer matrix $Q$. Every row of $Z$ is an integer combination of the rows of $Q$, so $Z=PQ$ for an integer matrix $P$. Since $Z$ has rank $r$, both $P$ and $Q$ have rank $r$. Set $X=BP$ and $Y=CQ^T$. Their columns are independent lattice vectors in $U$ and $V$, and $XY^T=BZC^T=W$. The subgroup basis used here follows from integer row reduction; its vectors need not be a subset of the original rows of $Z$.
\end{proof}

\Needspace{10\baselineskip}
\begin{lemma}[Low-rank Euclidean bound]\label{lem:lowrank}
Let $0\ne W\in U\otimes V$ have real rank $r$. If $r=1$, then
\[
 \|W\|_F^2\ge\lambda_2(U)^2\lambda_2(V)^2.
\]
If $r\ge2$, then
\begin{equation}\label{eq:lowrank}
 \|W\|_F^2\ge\frac9{4r}\lambda_2(U)^2\lambda_2(V)^2.
\end{equation}
In particular, for an integer $a\ge4$ and $1\le r\le2a$, the coefficient $9/(8a)$ is valid.
\end{lemma}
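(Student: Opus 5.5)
The plan is to use the factorization of Lemma~\ref{lem:factorization}, write the Frobenius norm as a trace of a product of two Gram matrices, and bound that trace below by determinants. Hermite's inequality then converts the determinants into powers of the Euclidean minima. Write $W=XY^T$ as in Lemma~\ref{lem:factorization}, where $X$ and $Y$ are bases of rank-$r$ sublattices $U'\subseteq U$ and $V'\subseteq V$. Put $G=X^TX$ and $H=Y^TY$; both are positive definite $r\times r$ Gram matrices. Then
\[
 \|W\|_F^2=\tr(XY^TYX^T)=\tr(GH).
\]

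\emph{The case $r=1$.} Here $W=xy^T$ with $0\ne x\in U$ and $0\ne y\in V$. Hence $\|W\|_F^2=\|x\|_2^2\|y\|_2^2\ge\lambda_2(U)^2\lambda_2(V)^2$, which is the first claim.

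\emph{The case $r\ge2$.} The matrix $GH$ is similar to $G^{1/2}HG^{1/2}$, which is positive definite. Its eigenvalues are therefore positive, and their product is $\det G\det H$. The AM--GM inequality then gives
\[
 \tr(GH)\ge r\,(\det G)^{1/r}(\det H)^{1/r}.
\]
Next, Hermite's inequality applied to the rank-$r$ lattice $U'$ gives $\lambda_2(U')^2\le\gamma_r(\det G)^{1/r}$, where $\gamma_r$ is the Hermite constant. Every nonzero vector of $U'$ lies in $U$, so $\lambda_2(U')\ge\lambda_2(U)$. The same reasoning applies to $V'$. Combining these,
\[
 \|W\|_F^2\ge\frac{r}{\gamma_r^2}\lambda_2(U)^2\lambda_2(V)^2 .
\]
It remains to check that $r/\gamma_r^2\ge 9/(4r)$, that is, $\gamma_r\le 2r/3$. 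I will use the classical bounds $\gamma_2=2/\sqrt3$ and $\gamma_r\le1+r/4$ for all $r$; the latter is Blichfeldt's bound in the form recorded by Conway and Sloane.
\begin{itemize}
\item For $r=2$: $r/\gamma_2^2=2\cdot\tfrac34=\tfrac32\ge\tfrac98$.
\item For $r\ge3$: the inequality $1+r/4\le 2r/3$ is equivalent to $r\ge12/5$, so it holds.
\end{itemize}
This proves \eqref{eq:lowrank}.

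\emph{The final assertion.} Suppose $a\ge4$ and $2\le r\le 2a$. Then $9/(4r)\ge 9/(8a)$. For $r=1$ the coefficient is $1$, and $1\ge 9/(8a)$ because $a\ge4$. So the coefficient $9/(8a)$ is valid throughout $1\le r\le 2a$.

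The only step that is not routine is the determinant lower bound. The factors $X$ and $Y$ cannot be rescaled by a real change of basis without leaving the lattices, so the argument must not depend on a particular choice of factorization. The AM--GM step achieves this, because it depends only on the two determinants, and those are lattice invariants of $U'$ and $V'$. The one point that needs care is citing an explicit bound on $\gamma_r$ that is valid for every $r\ge3$. If the linear bound $1+r/4$ is not acceptable as a citation, I would instead use Minkowski's bound $\gamma_r\le 4\,\Gamma(1+r/2)^{2/r}/\pi$. This bound is also at most $2r/3$ for every $r\ge3$, but checking that requires a separate verification for small $r$.
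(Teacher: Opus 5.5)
Your proof is correct and follows the paper's argument: the integer rank factorization, the identity $\|W\|_F^2=\tr(G_XG_Y)$, and AM--GM on the eigenvalues of the symmetrized product, which give $\|W\|_F^2\ge r(\det G_X\det G_Y)^{1/r}$. The only difference is the final step. You cite $\gamma_2=2/\sqrt3$ and $\gamma_r\le1+r/4$, whereas the paper proves the weaker but sufficient Minkowski-type bound $\gamma_r\le 4v_r^{-2/r}\le 2r/\pi$ directly, from the packing inequality $\det\Lambda\ge v_r(\lambda_2(\Lambda)/2)^r$ and the inductive estimate $v_r\ge(2\pi/r)^{r/2}$. This keeps the lemma self-contained with coefficient $\pi^2/(4r)$ for all $r\ge2$, so your fallback check of small $r$ is not needed.
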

\begin{proof}
Use Lemma~\ref{lem:factorization}. The Gram matrices $G_X=X^TX$ and $G_Y=Y^TY$ are symmetric positive definite: for example, $z^TG_Xz=\|Xz\|_2^2>0$ for every nonzero real $z$. The $j$-th diagonal entry of $W^TW$ is $\sum_i W_{ij}^2$, so taking its trace sums the squares of all entries of $W$:
\[
 \|W\|_F^2=\tr(W^TW).
\]
Since $W=XY^T$, we have $W^T=YX^T$, and hence
\[
 \|W\|_F^2=\tr(YX^TXY^T)
 =\tr(X^TXY^TY)=\tr(G_XG_Y).
\]
Here the middle equality uses $\tr(AB)=\tr(BA)$ for rectangular matrices whose two products are defined and square.
To apply the arithmetic--geometric mean inequality, use the symmetric positive-definite matrix $G_Y^{1/2}G_XG_Y^{1/2}$. Here $G_Y^{1/2}$ is the positive-definite matrix square root: if
\[
 G_Y=Q\diag(\gamma_1,\ldots,\gamma_r)Q^T,\qquad\gamma_i>0,
\]
with $Q$ orthogonal, then
\[
 G_Y^{1/2}=Q\diag(\sqrt{\gamma_1},\ldots,\sqrt{\gamma_r})Q^T.
\]
This is a square root under matrix multiplication. The symmetric matrix above has trace $\tr(G_XG_Y)$ and determinant $\det G_X\det G_Y$. Applying the scalar arithmetic--geometric mean inequality to its $r$ positive eigenvalues gives
\begin{equation}\label{eq:determinant}
 \|W\|_F^2\ge r(\det G_X\det G_Y)^{1/r}
 =r\bigl(\det L(X)\det L(Y)\bigr)^{2/r}.
\end{equation}
Here $\det L(X)=\sqrt{\det G_X}$ is the covolume in its real span, and similarly for $Y$. This is the determinant estimate underlying~\cite[Claim 3.6]{hr}.

Let $v_r$ denote the volume of the unit Euclidean ball in dimension $r$. For any rank-$r$ lattice $\Lambda$, the open ball of radius $\lambda_2(\Lambda)/2$ injects into the quotient of its real span by $\Lambda$: the difference of two points in that ball has length strictly less than $\lambda_2(\Lambda)$. The ball's volume is therefore at most the quotient volume, which is the covolume. Thus
\[
 \det\Lambda\ge v_r\left(\frac{\lambda_2(\Lambda)}2\right)^r.
\]
Since $L(X)\subseteq U$ and $L(Y)\subseteq V$, their minima are at least those of $U,V$. Substituting the two covolume bounds in \eqref{eq:determinant} yields
\begin{equation}\label{eq:ball-bound}
 \|W\|_F^2\ge\frac{r v_r^{4/r}}{16}
                    \lambda_2(U)^2\lambda_2(V)^2.
\end{equation}

For completeness, the exact ball-volume formula gives
\begin{equation}\label{eq:ball-volume}
 v_r=\frac{\pi^{r/2}}{\Gamma(1+r/2)},
 \qquad v_{r+2}=\frac{2\pi}{r+2}v_r.
\end{equation}
The $\Gamma$ in this formula is the Gamma function. We claim that for every integer $r\ge2$,
\begin{equation}\label{eq:ball-elementary}
 v_r\ge\left(\frac{2\pi}{r}\right)^{r/2}
       >\left(\frac6r\right)^{r/2}.
\end{equation}
For $r=2$ the first inequality is equality. For $r=3$, $v_3=4\pi/3$ and
\[
 \frac{v_3}{(2\pi/3)^{3/2}}=\sqrt{\frac6\pi}>1.
\]
If the first inequality holds in dimension $r$, then \eqref{eq:ball-volume} gives
\[
\begin{aligned}
 v_{r+2}
 &\ge\frac{2\pi}{r+2}\left(\frac{2\pi}{r}\right)^{r/2}\\
 &=\left(\frac{2\pi}{r+2}\right)^{(r+2)/2}
     \left(\frac{r+2}{r}\right)^{r/2}
 \ge\left(\frac{2\pi}{r+2}\right)^{(r+2)/2}.
\end{aligned}
\]
The two base cases cover even and odd dimensions. The strict second inequality in \eqref{eq:ball-elementary} follows from $2\pi>6$. Therefore the coefficient in \eqref{eq:ball-bound} is at least $\pi^2/(4r)>9/(4r)$, proving \eqref{eq:lowrank}.

When $r=1$, the integer factorization is $W=xy^T$ for nonzero $x\in U,y\in V$, and $\|W\|_F^2=\|x\|_2^2\|y\|_2^2$. Finally, if $a\ge4$ and $1\le r\le2a$, the rank-one coefficient $1$ is at least $9/(8a)$, while $9/(4r)\ge9/(8a)$ for the other ranks.
\end{proof}

\section{Amplification by matrix rank and coordinate support}\label{sec:amplification}
The proof now combines the two soundness invariants. The integer $a$ measures the current squared Euclidean minimum relative to the modular distance scale. We use $2a+1$ columns to detect large modular rank, and the low-rank estimate for the remaining ranks at most $2a$.

\subsection{An integer collision inequality}
\begin{lemma}[Collision bound]\label{lem:collision}
For every integer $a\ge1$ and every list of $2a+1$ integers $z_1,\ldots,z_{2a+1}$,
\begin{equation}\label{eq:collision}
 \sum_{i=1}^{2a+1}z_i^2
 +a^2\#\{i<j:z_i=z_j\}
 \ge\frac{a(a+1)(2a+1)}3.
\end{equation}
\end{lemma}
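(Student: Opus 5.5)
The right-hand side equals $2\sum_{m=1}^{a}m^2$, which is exactly $\sum z_i^2$ for the list $0,\pm1,\ldots,\pm a$ of $2a+1$ distinct integers closest to zero. The plan is to show that every list either matches this bound through its squares alone, or pays enough in collision penalties. I would group the list by value. Let the distinct values be $u_1,\ldots,u_m$ with multiplicities $n_1,\ldots,n_m$, so that $\sum n_\ell=2a+1$. The collision count is $\sum_\ell\binom{n_\ell}{2}$, and the left-hand side becomes
\[
 F=\sum_\ell\Bigl(n_\ell u_\ell^2+a^2\binom{n_\ell}{2}\Bigr).
\]

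First, I would reduce to a canonical arrangement. Suppose the distinct values are sorted by absolute value. The $\ell$-th of them has $|u_\ell|\ge\lfloor \ell/2\rfloor$, since at most one value has absolute value $0$ and at most two have each positive absolute value. Assigning the largest multiplicities to the smallest absolute values can only decrease $F$, by a rearrangement argument. So it suffices to bound
\[
 \sum_{\ell=1}^{m}\Bigl(n_\ell\lfloor \ell/2\rfloor^2+a^2\binom{n_\ell}{2}\Bigr)
\]
from below, where $n_1\ge n_2\ge\cdots$.

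Second, I would use a local exchange (discrete convexity) argument. Moving one copy from a value with multiplicity $n\ge2$ to an unused value of square $c^2\le a^2$ changes $F$ by $c^2-u^2-a^2(n-1)\le a^2-a^2(n-1)-u^2\le0$. Unused values with $c^2\le a^2$ exist whenever $m<2a+1$, because there are exactly $2a+1$ integers in $[-a,a]$. Repeating this move does not increase $F$ and eventually produces $2a+1$ distinct values, whose sum of squares is at least $\sum_{|v|\le a}v^2$, which is the right-hand side.

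More directly, the core inequality is
\[
 F\ge\sum_i w_i,\qquad w_i=\lfloor i/2\rfloor^2
\]
over all $2a+1$ slots. Each surplus copy of a value costs at least $a^2$ in collisions, and each slot weight $w_i$ is at most $a^2$, so surplus copies are always compensated.

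The main obstacle is making this exchange rigorous without circularity: one must check that an unused value of absolute value at most $a$ always exists while collisions remain. This holds by pigeonhole, since fewer than $2a+1$ distinct values are in use. One must also check that the collision count decreases by exactly $n-1$ per move. Both checks are simple, so the induction on the total number of collisions closes.
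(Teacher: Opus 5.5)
Your proof is correct and uses the same key comparison as the paper. After grouping by value, the $j$-th repeat ($j\ge1$) of a value $u$ costs $u^2+a^2j\ge a^2$, whereas a fresh value in $[-a,a]$ costs at most $a^2$. The paper applies this in one step, bounding the left side below by the $2a+1$ smallest costs $z^2+a^2j$; that is precisely your ``more directly'' paragraph. Your exchange induction applies the same comparison one copy at a time.

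The rearrangement reduction in your first step is not needed. The exchange works on the original list, and the bound $|u_\ell|\ge\lfloor\ell/2\rfloor$ is only used at the end, for the resulting $2a+1$ distinct values.
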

\begin{proof}
Group the terms according to their integer value. If $z$ occurs $n_z$ times, it contributes
\[
 n_zz^2+a^2\binom{n_z}{2}
 =\sum_{j=0}^{n_z-1}(z^2+a^2j).
\]
Hence the left side is the sum of $2a+1$ selected entries, counted with multiplicity, from the multiset
\[
 \{z^2+a^2j:z\in\Z,\ j\ge0\}.
\]
The first occurrences of $-a,-a+1,\ldots,a$ each cost at most $a^2$. A repeated occurrence costs at least $a^2$, and a first occurrence outside this interval costs more than $a^2$. It follows that the $2a+1$ smallest costs have the same total as these $2a+1$ first occurrences. Ties at $a^2$ do not change that total. Any selection therefore costs at least
\[
 \sum_{z=-a}^{a}z^2
 =2\sum_{z=1}^{a}z^2
 =\frac{a(a+1)(2a+1)}3.
\]
This proves the inequality.
\end{proof}

\Needspace{23\baselineskip}
\subsection{One amplification phase}
\begin{samepage}
\begin{theorem}[Rank and support amplification]\label{thm:phase}
Fix integers $a\ge4$ and $n\ge1$, and put $F_a=a(a+1)/3$. Let $U\subseteq\Z^N$ be a nonzero lattice, let $D\ge1$ be an integer, and let $q$ be prime. Suppose
\begin{equation}\label{eq:phase-hypotheses}
 \lambda_2(U)^2\ge aD,
 \qquad d_H(U\bmod q)\ge D,
 \qquad q>(aD^n)^{a+1}.
\end{equation}
Define
\begin{equation}\label{eq:recurrence}
 c_1=a,\qquad c_{j+1}=\frac{a^3c_j}{a^3-F_a+c_j}.
\end{equation}
Then, for $1\le j\le n$,
\begin{equation}\label{eq:phase-conclusion}
 \lambda_2(U^{\otimes j})^2\ge c_jD^j.
\end{equation}
The coefficients have the explicit formula
\begin{equation}\label{eq:closed-form}
 c_j=\frac{F_a}{1+(F_a/a-1)(1-F_a/a^3)^{j-1}}
\end{equation}
and increase strictly from $a$ to $F_a$ as $j$ tends to infinity.
\end{theorem}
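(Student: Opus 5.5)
We argue by induction on $j$. The base case $c_1=a$ is the hypothesis $\lambda_2(U)^2\ge aD$. For the step $j\to j+1\le n$ we write $U^{\otimes(j+1)}=U^{\otimes j}\otimes U$ and reshape a nonzero vector as a matrix $W$ via \eqref{eq:matrix-form}. Its columns lie in $U^{\otimes j}$, and its transposed rows lie in $U$. By Lemma~\ref{lem:hamming}, $d_H(U^{\otimes j}\bmod q)\ge D^j$, and the induction hypothesis gives $\lambda_2(U^{\otimes j})^2\ge c_jD^j$. Suppose $\|W\|_F^2<c_{j+1}D^{j+1}$; we seek a contradiction. We split according to the rank of $W$ modulo $q$, using the threshold $2a+1$ announced at the start of the section.

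\emph{Small modular rank.} Assume $\rank_{\F_q}(W\bmod q)\le 2a$. We first transfer this to the reals. Every $(2a+1)$-minor of $W$ is an integer. By Hadamard's inequality combined with AM--GM on the squared row norms of the minor, its absolute value is at most $(\|W\|_F^2/(2a+1))^{(2a+1)/2}$. Using $c_{j+1}\le F_a$ and $j+1\le n$, this is at most $(F_aD^n/(2a+1))^{a+1/2}<(aD^n)^{a+1}<q$, because $F_a/(2a+1)<a$. Hence every such minor vanishing modulo $q$ vanishes outright, and the real rank $r$ satisfies $r\le2a$. Lemma~\ref{lem:lowrank} then gives
\[
\|W\|_F^2\ge\tfrac{9}{8a}\,c_jD^j\cdot aD=\tfrac98c_jD^{j+1}.
\]
It remains to check the elementary inequality $c_{j+1}/c_j=a^3/(a^3-F_a+c_j)\le a^3/(a^3-F_a+a)<9/8$ for $a\ge4$. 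This uses $c_j\ge a$, which we carry along in the induction.

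\emph{Large modular rank.} Now $W\bmod q$ has rank at least $2a+1$. Among the columns, choose $2a+1$ whose reductions are linearly independent, taking them greedily in order of increasing Euclidean length; call them $w_1,\dots,w_{2a+1}$. For each pair $s<t$, the difference $w_s-w_t$ is nonzero modulo $q$ and lies in $U^{\otimes j}\bmod q$. So it is nonzero in at least $D^j$ rows, and hence $w_s,w_t$ agree as integers in at most $|R|-D^j$ rows of any row set $R$ containing the supports. Apply Lemma~\ref{lem:collision} row by row to the $2a+1$ entries in the chosen columns, and sum over the rows of the combined support $R$. This gives
\[
\sum_s\|w_s\|^2\ \ge\ F_a(2a+1)|R|-a^2\sum_{s<t}\bigl(|R|-D^j\bigr).
\]
The contribution of $|R|$ must be controlled, and I would do this by choosing $R$ as the coordinates where the modular span is supported; its size relates to $D^j$ through the Hamming bound. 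The leftover columns, which are not among the chosen short ones, each have squared norm at least $c_jD^j$, and there are at least $D-2a-1$ nonzero columns by the Hamming distance of the row code $U\bmod q$. We then balance these two contributions. The rows of $W$ give at least $D$ nonzero columns. A fraction of the Frobenius mass comes from ``cheap'' columns, which are controlled by the collision bound with average cost near $F_aD^j$, and the rest from ``expensive'' columns costing at least $c_jD^j$. Optimising this split should produce exactly the recurrence $c_{j+1}=a^3c_j/(a^3-F_a+c_j)$, with the weight $a^3=a\cdot a^2$ coming from the $a^2$ penalty in \eqref{eq:collision} and the factor $a$ in $\lambda_2(U)^2\ge aD$.

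The main obstacle is this large-rank bookkeeping. The difficulty is to turn the pairwise collision bound into a clean lower bound on $\|W\|_F^2$ in terms of $D^{j+1}$ without losing the constant, so that the bound matches \eqref{eq:recurrence} rather than something weaker. I would first try a weighted averaging: sort the columns by length, apply Lemma~\ref{lem:collision} to every window of $2a+1$ modularly independent columns, and average. Finally, the closed form \eqref{eq:closed-form} is verified by substituting it into \eqref{eq:recurrence}: the reciprocal $1/c_j$ satisfies an affine recurrence $1/c_{j+1}=(1-F_a/a^3)/c_j+1/a^3$. Monotonic increase to $F_a$ follows because $F_a/a-1=(a-2)/3>0$ and $0<1-F_a/a^3<1$.
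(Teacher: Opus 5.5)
Your solution of the recurrence and your small-modular-rank case (Hadamard plus AM--GM to lift the rank bound from $\F_q$ to $\R$, then Lemma~\ref{lem:lowrank} with $c_{j+1}/c_j<9/8$) are correct and match the paper. The gap is the large-modular-rank case. You leave it as an unexecuted ``balancing'', and as you have set it up it cannot reach the recurrence.

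In the orientation $U^{\otimes j}\otimes U$ you choose the $2a+1$ independent vectors among the columns, which lie in $U^{\otimes j}$. The active rows over which you sum Lemma~\ref{lem:collision} then lie in $U$. The available bound on their number is $|R|\le\|W\|_F^2/(aD)<c_{j+1}D^j/a$, so the induction hypothesis never enters. Suppose you supply the shortness bound $\|w_s\|_2^2\le\|W\|_F^2/D<c_{j+1}D^j$, which you do not prove: greedy selection by length needs the fact that at least $D$ columns lie outside any proper subspace of the modular column space. Even then, the collision estimate contradicts only $c_{j+1}\le a^4/(a^3-F_a+a)=c_2$, so the induction stalls after its first step. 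Counting ``expensive'' leftover columns does not repair this. Every nonzero column in your orientation already costs at least $c_jD^j$, which yields only $c_{j+1}\ge c_j$. A collision gain on a fixed number of columns is of order $D^j$, too small to add a term of order $D^{j+1}$.

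The missing idea is to choose the short vectors in the single-factor direction and to contradict their shortness directly. Use the orientation $U\otimes U^{\otimes j}$ (equivalently, choose rows in yours). Then the $R$ active rows lie in $U^{\otimes j}$, and the induction hypothesis gives $R<(c_{j+1}/c_j)D$. Let $S$ be the span of the reductions already chosen. Take a functional that vanishes on $S$ and is nonzero on some column, and apply it to the rows; this produces a nonzero word of $(U\bmod q)^{\otimes j}$. Hence at least $D^j$ columns lie outside $S$. You can therefore choose $2a+1$ columns $u_i\in U$ with independent reductions and $\|u_i\|_2^2\le\|W\|_F^2/D^j<c_{j+1}D$. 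Their pairwise differences are nonzero modulo $q$ and lie in $U$, so each pair agrees on at most $R-D$ active rows. Summing Lemma~\ref{lem:collision} over the active rows gives
\[
 \sum_{i=1}^{2a+1}\|u_i\|_2^2\ge(2a+1)\left[a^3D-(a^3-F_a)R\right]>(2a+1)\left[a^3-(a^3-F_a)\frac{c_{j+1}}{c_j}\right]D=(2a+1)c_{j+1}D,
\]
where the last equality is exactly the recurrence \eqref{eq:recurrence}. This contradicts the shortness of the $u_i$, with no balancing step.
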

\end{samepage}

\begin{proof}
We begin by solving the recurrence, then prove the lattice bound by induction.

\paragraph{Solving the recurrence.}
Since $a\ge4$, we have $a<F_a<a^3$. All denominators in \eqref{eq:recurrence} are positive. Taking reciprocals and separating the numerator gives
\[
 \frac1{c_{j+1}}
 =\frac{a^3-F_a+c_j}{a^3c_j}
 =\left(1-\frac{F_a}{a^3}\right)\frac1{c_j}
   +\frac1{a^3}.
\]
The constant $1/F_a$ satisfies this same relation, because
\[
 \left(1-\frac{F_a}{a^3}\right)\frac1{F_a}
 +\frac1{a^3}=\frac1{F_a}.
\]
Subtracting this identity from the preceding one removes the constant term:
\begin{equation}\label{eq:centered-recurrence}
 \frac1{c_{j+1}}-\frac1{F_a}
 =\left(1-\frac{F_a}{a^3}\right)
   \left(\frac1{c_j}-\frac1{F_a}\right).
\end{equation}
Apply \eqref{eq:centered-recurrence} successively for the indices $1,\ldots,j-1$. Each step multiplies the difference by the same number, so $c_1=a$ gives
\begin{equation}\label{eq:reciprocal-formula}
 \frac1{c_j}-\frac1{F_a}
 =\left(1-\frac{F_a}{a^3}\right)^{j-1}
   \left(\frac1a-\frac1{F_a}\right).
\end{equation}
Adding $1/F_a$ and taking the reciprocal proves \eqref{eq:closed-form}. Both $1/a-1/F_a$ and $1-F_a/a^3$ are positive, and the latter is less than one. Thus the reciprocals decrease strictly to $1/F_a$, or equivalently
\begin{equation}\label{eq:coefficient-properties}
 a=c_1<c_2<\cdots<F_a,\qquad \lim_{j\to\infty}c_j=F_a.
\end{equation}
This limit concerns the numerical recurrence. For a fixed prime, the lattice assertion is only made through the prescribed finite order $n$ in \eqref{eq:phase-hypotheses}.

\paragraph{The inductive setup.}
The case $j=1$ is a hypothesis. Suppose \eqref{eq:phase-conclusion} holds at an index $j<n$. If it fails at $j+1$, there is a nonzero integer tensor which, in the orientation $U\otimes U^{\otimes j}$, is a matrix $W$ with
\begin{equation}\label{eq:short-tensor}
 \|W\|_F^2<c_{j+1}D^{j+1}.
\end{equation}
Its columns lie in $U$, and its transposed rows lie in $U^{\otimes j}$. Let $R$ be the number of nonzero rows, which we call active rows. Each has squared norm at least $c_jD^j$ by induction. Therefore
\begin{equation}\label{eq:active-rows}
 Rc_jD^j\le\|W\|_F^2,
 \qquad R<\frac{c_{j+1}}{c_j}D.
\end{equation}
We will contradict \eqref{eq:short-tensor} for both possible ranges of the rank of $W\bmod q$.

\paragraph{Selecting short columns when the modular rank is large.}
Suppose first that $\rank_{\F_q}(W\bmod q)\ge2a+1$. We claim that $W$ has columns $u_1,\ldots,u_{2a+1}$ whose reductions are independent and whose squared norms satisfy
\begin{equation}\label{eq:short-columns}
 \|u_i\|_2^2\le\frac{\|W\|_F^2}{D^j}<c_{j+1}D
 \qquad(1\le i\le2a+1).
\end{equation}
To see this, suppose fewer than $2a+1$ such columns have been chosen. Write $\bar W=W\bmod q$ for this selection argument, and let $S$ be the span of the selected column reductions. Since $\rank_{\F_q}\bar W\ge2a+1$, the space $S$ is a proper subspace of the column space of $\bar W$. Choose a column $\bar w$ outside $S$. Extend a basis of $S$, first by $\bar w$ and then to a basis of $\F_q^N$. Defining a functional to be zero on the basis of $S$, one on $\bar w$, and zero on the remaining basis vectors gives a linear functional $\varphi$ with
\[
 \varphi|_S=0,\qquad \varphi(\bar w)=1.
\]
In standard coordinates, write $\varphi(x)=\sum_{i=1}^N\beta_i x_i$. Applying $\varphi$ to every column of $\bar W$ produces the row vector
\[
 \beta^T\bar W=\sum_{i=1}^N\beta_i\,\operatorname{row}_i(\bar W).
\]
This is a linear combination of the rows, and the resulting vector is nonzero: its coordinate at the chosen column $\bar w$ equals $\varphi(\bar w)=1$. Each row lies in the linear code
\[
 U^{\otimes j}\bmod q=(U\bmod q)^{\otimes j},
\]
so their nonzero combination lies in the same code and has at least $D^j$ nonzero coordinates by Lemma~\ref{lem:hamming}. At each such coordinate, $\varphi$ is nonzero on the corresponding column. That column therefore lies outside $S$, where $\varphi$ vanishes. These at least $D^j$ integer columns have total squared norm at most $\|W\|_F^2$, so at least one has squared norm at most $\|W\|_F^2/D^j$. Choose it. Its reduction extends the independent family, and repeating this procedure proves \eqref{eq:short-columns}.

\paragraph{Counting coordinate collisions.}
For distinct selected columns $u_i,u_l$, independence implies that $(u_i-u_l)\bmod q$ is nonzero. Since $u_i-u_l\in U$, it differs from zero on at least $D$ coordinates modulo $q$. All such coordinates lie among the $R$ active rows. Thus $R\ge D$, and the pair $u_i,u_l$ can agree as integers on at most $R-D$ active rows. Here we only use that integer equality implies equality modulo $q$.

We next count all occurrences of a pair of selected columns agreeing in an active row. Number the active rows $1,\ldots,R$ for this count. Summing first over rows or first over pairs counts the same occurrences, so
\[
\begin{aligned}
 \sum_{r=1}^R\#\{i<l:(u_i)_r=(u_l)_r\}
 &=\sum_{1\le i<l\le2a+1}
       \#\{1\le r\le R:(u_i)_r=(u_l)_r\}\\
 &\le\binom{2a+1}{2}(R-D).
\end{aligned}
\]
For each fixed pair, the bound $R-D$ already counts its agreeing rows among all $R$ active rows. Thus there is no additional factor of $R$. In an individual row there may be as many as $\binom{2a+1}{2}$ agreeing pairs; the preceding bound controls their total across rows.

Apply Lemma~\ref{lem:collision} to the $2a+1$ selected entries in each active row, and sum. The right side of that lemma contributes $(2a+1)F_a$ per row, while the total collision count is bounded by the displayed inequality. Hence
\begin{align}
 \sum_{i=1}^{2a+1}\|u_i\|_2^2
 &\ge(2a+1)F_aR-a^2\binom{2a+1}{2}(R-D)\notag\\
 &=(2a+1)\bigl[a^3D-(a^3-F_a)R\bigr].
 \label{eq:collision-sum}
\end{align}
The second line uses $\binom{2a+1}{2}=a(2a+1)$. Since $a^3-F_a>0$, the strict upper bound on $R$ in \eqref{eq:active-rows} now gives
\[
 \sum_{i=1}^{2a+1}\|u_i\|_2^2
 >(2a+1)\left[a^3-(a^3-F_a)\frac{c_{j+1}}{c_j}\right]D.
\]
The bracket equals $c_{j+1}$. Indeed, multiplying \eqref{eq:recurrence} by its denominator and dividing by $c_j$ gives
\[
 (a^3-F_a)\frac{c_{j+1}}{c_j}+c_{j+1}=a^3.
\]
This also explains the choice of the recurrence: it makes the collision lower bound match the proposed next coefficient. We have shown that the sum of the $2a+1$ squared column norms is greater than $(2a+1)c_{j+1}D$, contradicting \eqref{eq:short-columns}. Therefore
\begin{equation}\label{eq:mod-rank}
 \rank_{\F_q}(W\bmod q)\le2a.
\end{equation}

\paragraph{Transferring the rank bound to the reals.}
If $\rank_{\R}W\ge2a+1$, some square submatrix $W_0$ of order $2a+1$ has a nonzero integer determinant. By \eqref{eq:mod-rank}, this determinant vanishes modulo $q$, so $q$ divides it. We now bound its absolute value. Let $w_1,\ldots,w_{2a+1}$ be the columns of $W_0$. Hadamard's inequality states that the absolute determinant is at most the product of the Euclidean column lengths:
\[
 |\det W_0|\le\prod_{i=1}^{2a+1}\|w_i\|_2.
\]
The arithmetic--geometric mean inequality, applied to the $2a+1$ nonnegative numbers $\|w_i\|_2^2$, gives
\[
 \prod_{i=1}^{2a+1}\|w_i\|_2^2
 \le\left(\frac{\sum_{i=1}^{2a+1}\|w_i\|_2^2}{2a+1}\right)^{2a+1}.
\]
Taking square roots and using $\sum_i\|w_i\|_2^2=\|W_0\|_F^2\le\|W\|_F^2$, we obtain
\[
 |\det W_0|
 \le\left(\frac{\|W_0\|_F^2}{2a+1}\right)^{(2a+1)/2}
 \le\left(\frac{\|W\|_F^2}{2a+1}\right)^{a+1/2}.
\]
By \eqref{eq:short-tensor}, \eqref{eq:coefficient-properties}, $j+1\le n$, and $D\ge1$, we have $\|W\|_F^2<F_aD^n$. Retaining the denominator $2a+1$ therefore yields the simpler bound
\[
 \frac{\|W\|_F^2}{2a+1}
 <\frac{F_aD^n}{2a+1}
 =\frac{a(a+1)}{3(2a+1)}D^n
 <aD^n.
\]
Since $aD^n>1$, the determinant satisfies
\[
 0<|\det W_0|<(aD^n)^{a+1/2}<(aD^n)^{a+1}<q.
\]
A nonzero integer of absolute value less than $q$ cannot be divisible by $q$. This contradiction proves
\begin{equation}\label{eq:real-rank}
 1\le\rank_{\R}W\le2a.
\end{equation}
\paragraph{The low-rank contradiction.}
By \eqref{eq:real-rank}, the nonzero matrix $W$ has real rank at most $2a$, so Lemma~\ref{lem:lowrank} applies with coefficient $9/(8a)$. Combining it with the hypothesis on $U$ and the induction hypothesis gives
\begin{equation}\label{eq:lowrank-induction}
 \|W\|_F^2\ge\frac9{8a}\lambda_2(U)^2\lambda_2(U^{\otimes j})^2
 \ge\frac9{8a}(aD)(c_jD^j)
 =\frac98c_jD^{j+1}.
\end{equation}
It remains to check that $(9/8)c_j>c_{j+1}$. Directly from \eqref{eq:recurrence},
\[
 \frac{c_{j+1}}{c_j}
 =\frac1{1-F_a/a^3+c_j/a^3}
 <\frac1{1-F_a/a^3}.
\]
For $a\ge4$,
\[
 \frac{F_a}{a^3}=\frac{a+1}{3a^2}\le\frac5{48},
\]
because the equivalent inequality is
\[
 5a^2-16a-16=(a-4)(5a+4)\ge0.
\]
Consequently,
\begin{equation}\label{eq:coefficient-ratio}
 \frac{c_{j+1}}{c_j}<\frac{48}{43}<\frac98.
\end{equation}
Combining \eqref{eq:lowrank-induction} and \eqref{eq:coefficient-ratio} contradicts \eqref{eq:short-tensor}. This completes the induction and the proof.
\end{proof}

\subsection{Repeated amplification of tensor blocks}
\begin{proposition}[Arbitrarily large constant coefficients]\label{prop:schedule}
For every real $b_0>0$, there are an integer $b>b_0$ with $b\ge4$ and an even integer $T\ge2$, depending only on $b_0$, with the following property. If an integer-coordinate lattice $L$ and a prime $q$ satisfy
\begin{equation}\label{eq:schedule-hypotheses}
 \lambda(L)\ge2k,\qquad d_H(L\bmod q)\ge k,\qquad k\ge2,
 \qquad q>(bk^T)^{b+1},
\end{equation}
then
\begin{equation}\label{eq:schedule-conclusion}
 \lambda_2(L^{\otimes T})^2\ge bk^T.
\end{equation}
Given a rational upper bound on $b_0$, a finite rational-arithmetic algorithm computes such $b,T$ and the tensor schedule.
\end{proposition}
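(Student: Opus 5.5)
We build the schedule in stages, each stage a single application of Theorem~\ref{thm:phase} to a tensor block. \emph{Stage 0.} Take $U_0=L^{\otimes2}$ and $D_0=k^2$. Lemma~\ref{lem:seed} gives $\lambda_2(U_0)^2\ge\lambda(L)^2\ge4D_0$. Lemma~\ref{lem:hamming} gives $d_H(U_0\bmod q)\ge k^2=D_0$. So $U_0$ meets the first two hypotheses of \eqref{eq:phase-hypotheses} with $a_0=4$.

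\emph{Inductive stage.} Suppose $U_i=L^{\otimes m_i}$ satisfies
\[
\lambda_2(U_i)^2\ge a_iD_i,\qquad d_H(U_i\bmod q)\ge D_i,\qquad D_i=k^{m_i},
\]
with $a_i\ge4$ an integer. We have $F_{a_i}=a_i(a_i+1)/3>2a_i$ as soon as $a_i\ge6$; for $a_i=4$ the limit is $F_4=20/3>6$. By \eqref{eq:coefficient-properties} we can therefore fix a finite $n_i$ with $c_{n_i}\ge a_{i+1}$, where $a_{i+1}$ is the integer $6$ when $a_i=4$ and $2a_i$ when $a_i\ge6$. This reproduces the sequence $4,6,12,24,\ldots$ from the introduction. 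Theorem~\ref{thm:phase} then gives
\[
\lambda_2(U_i^{\otimes n_i})^2\ge c_{n_i}D_i^{n_i}\ge a_{i+1}k^{m_in_i}.
\]
Lemma~\ref{lem:hamming} gives $d_H(U_i^{\otimes n_i}\bmod q)\ge D_i^{n_i}$. So $U_{i+1}=L^{\otimes m_{i+1}}$ with $m_{i+1}=m_in_i$ satisfies the same invariant with $D_{i+1}=k^{m_{i+1}}$.

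We stop at the first stage $s$ with $a_s>b_0$, and set $b=a_s$ and $T=m_s$. Then $T$ is even, since $m_0=2$ and all $m_i$ are multiples of it. Both $b$ and $T$ depend only on $b_0$. Each $n_i$ is found by computing the rational recurrence \eqref{eq:recurrence} until it reaches $a_{i+1}$. The limit $F_{a_i}$ exceeds $a_{i+1}$ strictly, so this loop terminates, and the whole procedure is a finite rational computation. Only a rational upper bound on $b_0$ is needed to decide when to stop.

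The main obstacle is the prime condition $q>(a_iD_i^{n_i})^{a_i+1}$ at every stage. Here $D_i^{n_i}=k^{m_{i+1}}\le k^T$, and we have $a_i+1\le b+1$ and $a_i\le b$. So $(a_iD_i^{n_i})^{a_i+1}\le(bk^T)^{b+1}<q$ by \eqref{eq:schedule-hypotheses}. We also need $(bk^T)^{b+1}\ge1$, which holds since $b\ge4$ and $k\ge2$. Thus the single hypothesis on $q$ covers all stages uniformly, and Theorem~\ref{thm:phase} applies at each stage with $D=D_i\ge1$ an integer.

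Two routine checks remain. First, each $U_i$ is a nonzero lattice; this is immediate since $\lambda(L)\ge2k$ forces $L\ne0$. Second, the exactness of the integer coefficient $4$ at stage 0 uses $\lambda(L)^2\ge4k^2$ directly. The conclusion \eqref{eq:schedule-conclusion} is then the invariant at the final stage.
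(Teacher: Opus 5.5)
Your proposal is correct and follows essentially the same route as the paper: seed with $L^{\otimes2}$ at $a=4$ using Lemmas~\ref{lem:seed} and~\ref{lem:hamming}, run phases of Theorem~\ref{thm:phase} along the integer targets $4,6,12,24,\ldots$, and cover every phase's prime condition with the single bound $(bk^T)^{b+1}$ via $m_{i+1}\le T$ and $a_i\le b$. One small correction: the monotonicity step there needs $a_iD_i^{n_i}\ge1$, so that raising to the larger exponent $b+1$ can only increase the quantity, rather than $(bk^T)^{b+1}\ge1$; this holds trivially since $a_i\ge4$.
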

\begin{proof}
Start with $U=L^{\otimes2}$, $D=k^2$, and $a=4$. Lemmas~\ref{lem:seed} and~\ref{lem:hamming} give
\[
 \lambda_2(U)^2\ge aD,\qquad d_H(U\bmod q)\ge D.
\]
Suppose a phase starts with these two inequalities for an integer $a\ge4$. Choose an integer $a'$ with $a<a'<F_a$. The increasing convergence in Theorem~\ref{thm:phase} gives a finite $n$ with $c_n>a'$. Subject to the prime condition, the theorem and Lemma~\ref{lem:hamming} yield
\[
 \lambda_2(U^{\otimes n})^2\ge c_nD^n>a'D^n,
 \qquad d_H(U^{\otimes n}\bmod q)\ge D^n.
\]
These are the same two hypotheses with $U^{\otimes n}$ in place of $U$, $D^n$ in place of $D$, and $a'$ in place of $a$. We may therefore repeat the argument. This operation only regroups the existing tensor factors; it introduces no rescaling or new gadget.

The integer targets can be chosen as
\begin{equation}\label{eq:targets}
 4,\ 6,\ 12,\ 24,\ 48,\ldots.
\end{equation}
Indeed, $F_4=20/3>6$, and $F_a=a(a+1)/3>2a$ for every $a\ge6$. Stop at an integer $b>b_0$. If $b_0<4$, the starting tensor square suffices with $b=4,T=2$. Otherwise, the total tensor order is twice the product of the finitely many phase orders. Denote it by $T$; it is even and depends only on the target bound.

We now verify all the prime conditions at once. In a phase starting with $U=L^{\otimes t}$, the scale is $D=k^t$. If its phase order is $n$, then $tn\le T$. Also $a\le b$, so
\[
 (aD^n)^{a+1}
 =(ak^{tn})^{a+1}
 \le(bk^T)^{b+1}<q.
\]
Every application of Theorem~\ref{thm:phase} is therefore valid, and the last phase proves \eqref{eq:schedule-conclusion}.

For effectivity, follow the targets in \eqref{eq:targets}. At each phase, start from $c_1=a$ and iterate the rational recurrence \eqref{eq:recurrence} until its value exceeds the next target. The explicit limit proves that this process terminates. Only finitely many phases are needed, and all of this computation depends on the fixed target bound, not on the input instance.
\end{proof}

\section{Arbitrary constant hardness in Euclidean space}\label{sec:hardness}
The remaining issue is compatibility between the large-prime hypothesis in Proposition~\ref{prop:schedule} and the deterministic gadget. We choose the tensor schedule before the gadget exponents, so the prime need only exceed a fixed power of $k$.

\subsection{Choosing the prime after the tensor schedule}
Fix $b,T$ from Proposition~\ref{prop:schedule} and rational $0<\varepsilon<1$. In Theorem~\ref{thm:base}, set $\alpha=1/2+\varepsilon/8$ and choose rational exponents with
\begin{equation}\label{eq:exponent-choice}
 0<\xi<\min\left\{\frac{2\alpha-1}{4\alpha},\frac1{2T(b+1)}\right\},
 \qquad \delta=\frac\xi2.
\end{equation}
The first restriction is the gadget condition. For the second, $k\le q^\xi$ gives
\begin{equation}\label{eq:prime-bound}
 (bk^T)^{b+1}
 \le b^{b+1}q^{\xi T(b+1)}
 < b^{b+1}q^{1/2}<q
\end{equation}
for all sufficiently large $q$. The fixed cutoff $q>b^{2(b+1)}$ ensures the last inequality. Increase the cutoff to meet the gadget and balance conditions as well.

For a source instance with $r$ sets, choose $Q_0$ as in the proof of Theorem~\ref{thm:base}, above this fixed cutoff and of order $(r+1)^{1/\delta}$ once $r$ is large. A prime in $[Q_0,2Q_0]$ can be found deterministically in polynomial time. The new lower bound on $q$ changes only the fixed constants and degree of that polynomial.

The order of choices is therefore: the approximation factor; the finite schedule $b,T$; the YES slack $\varepsilon$; the gadget constants $\alpha,\Gamma,\xi,\delta$; and finally the prime as a function of the source size. In particular, the schedule never depends on the eventual prime or on the source length.

\subsection{Completing the reduction}
\begin{proof}[Proof of Theorem~\ref{thm:main}]
Fix $\rho>1$. Choose a rational upper bound on $\rho^2$ and use Proposition~\ref{prop:schedule} to obtain an integer $b>\rho^2$ and an even $T$. Since $T$ is fixed, we can choose rational $0<\varepsilon<1$ small enough that
\begin{equation}\label{eq:epsilon-choice}
 \rho^2(1+\varepsilon)^T<b.
\end{equation}
To make this choice effective without computing $\rho$, it suffices to impose the inequality with the chosen rational upper bound in place of $\rho^2$.

Construct the base instance in Theorem~\ref{thm:base} with the exponent and prime choices just proved. Output
\begin{equation}\label{eq:output}
 \left(\cB^{\otimes T},\quad d=((1+\varepsilon)k)^{T/2}\right).
\end{equation}
The threshold $d$ is rational exactly, because $T/2$ is an integer.

For a YES input, there is a nonzero binary $y\in L$ with $\|y\|_1<(1+\varepsilon)k$. Its tensor power is nonzero and binary, and hence
\[
 \|y^{\otimes T}\|_2^2=\|y\|_1^T
 <((1+\varepsilon)k)^T=d^2.
\]
For a NO input, the base lattice satisfies both soundness conditions of Proposition~\ref{prop:schedule}, and \eqref{eq:prime-bound} supplies the required prime. Therefore
\[
 \lambda_2(L^{\otimes T})^2\ge bk^T
 >\rho^2((1+\varepsilon)k)^T=\rho^2d^2.
\]
This proves the two output promises.

Finally, $\cB$ has polynomially many rows and columns and polynomial entry bit lengths. Its tensor power has $N^T$ rows and $(q+1)^T$ columns; its entries are products of $T$ original entries. The threshold in \eqref{eq:output} also has polynomial bit length. All construction parameters and $T$ are fixed once $\rho$ is fixed. The entire many-one reduction is consequently deterministic and polynomial-time.
\end{proof}

The polynomial degree may be large. The theorem asserts a polynomial-time reduction for each fixed approximation factor; it does not give a polynomial bound uniform in a factor growing with the input.

\section{All fixed finite \texorpdfstring{$\ell_p$}{lp} norms}\label{sec:norms}
For $p\ge2$, the same tensor construction admits a direct norm comparison, given at the end of this section. To include $1\le p<2$, and to give a transfer valid for every fixed finite $p$, we construct an explicit integer sign embedding. Norm embeddings have been used in lattice reductions in~\cite{rr}; the particular construction and all estimates needed here are proved below.

\subsection{An explicit sign embedding}
\begin{theorem}[Enumerated sign embedding]\label{thm:embedding}
Fix $1\le p<\infty$ and an integer $r\ge\max\{2,\lceil p/2\rceil\}$. For every $N\ge2$, a deterministic algorithm constructs $R\in\{-1,1\}^{M\times N}$, where $M=O(N^{2r})$, such that for every $x\in\R^N$,
\begin{equation}\label{eq:embedding}
 \frac{M^{1/p}}{\sqrt3}\|x\|_2
 \le\|Rx\|_p
 \le\sqrt{2r-1}\,M^{1/p}\|x\|_2.
\end{equation}
The algorithm is polynomial-time for fixed $r$, and $R^TR=MI_N$, so $R$ is injective. When $N$ is a power of two, the construction has exactly $M=N^{2r}$ rows. For $1\le p\le2$, the upper constant $\sqrt{2r-1}$ can be replaced by $1$.
\end{theorem}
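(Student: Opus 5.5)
The plan is to take the rows of $R$ to be the sign patterns of a $2r$-wise independent family of $\pm1$ variables indexed by $\{1,\ldots,N\}$, enumerated over a small sample space. After that, the $\ell_p$ norm $\|Rx\|_p^p/M$ is exactly the $p$-th moment $\E|S|^p$ of the random sum $S=\sum_i\varepsilon_ix_i$ under the uniform distribution on the rows. Both sides of \eqref{eq:embedding} then become Khintchine-type moment inequalities, and these need only $2r$-wise independence.

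\emph{Construction.} First pad $N$ up to a power of two $N'=2^m\le 2N$; extra columns can be deleted at the end, and deleting them preserves every property. For the sample space I would use the dual BCH construction. Take $\F_{2^m}$ and, for each coefficient vector $(\beta_1,\beta_3,\ldots,\beta_{2r-1})\in\F_{2^m}^{r}$ and bit $\beta_0\in\F_2$, define the row $\varepsilon_\alpha=(-1)^{\beta_0+\Tr(\sum_{l}\beta_{2l-1}\alpha^{2l-1})}$ for $\alpha\in\F_{2^m}$. Then any $2r$ columns are uniformly distributed. This follows because a nontrivial $\F_2$-combination of at most $2r$ columns gives a nonzero character sum that vanishes: the code has dual distance at least $2r+1$ by the BCH bound, and the sign $\beta_0$ handles the constant. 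This gives $2^{mr+1}$ rows, which is $O(N^{r})$. That is within $O(N^{2r})$. To match the stated exact count $N^{2r}$ when $N$ is a power of two, I would instead use all $\beta_1,\ldots,\beta_{2r}$ with no parity restriction, which gives the $(2r)$-wise independent Reed--Solomon dual of size $N^{2r}$. The row count is not essential, and I would pick whichever is easiest to verify. Pairwise independence gives $R^TR=MI_N$ immediately, since $\E\varepsilon_i\varepsilon_j=\delta_{ij}$. The construction is polynomial for fixed $r$ because arithmetic in $\F_{2^m}$ and the trace are computable.

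\emph{Upper bound.} For $p\le2$, Jensen (monotonicity of moments) gives $(\E|S|^p)^{1/p}\le(\E S^2)^{1/2}=\|x\|_2$, which yields constant $1$. For $2<p\le 2r$, I would use $(\E|S|^p)^{1/p}\le(\E S^{2r})^{1/(2r)}$. Expanding $\E S^{2r}$ only involves $2r$-wise moments, so it equals the fully independent Rademacher moment. That moment is at most $(2r-1)!!\,\|x\|_2^{2r}$, by comparison with the Gaussian term by term, since the Rademacher moments are bounded by the Gaussian ones. Finally $((2r-1)!!)^{1/(2r)}\le\sqrt{2r-1}$.

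\emph{Lower bound.} This is the main obstacle, since we must avoid the full Khintchine constants, which need complete independence. I would use the Paley--Zygmund/Hölder interpolation that needs only fourth moments. Hölder gives $\E S^2\le(\E|S|^p)^{\theta}(\E S^4)^{1-\theta}$ for $p<2$, and $\E S^4\le 3\|x\|_2^4$ by $4$-wise independence, which holds since $r\ge2$. Solving for $\E|S|^p$ with $p=1$, i.e. $\E S^2\le(\E|S|)^{2/3}(\E S^4)^{1/3}$, gives $\E|S|\ge\|x\|_2/\sqrt3$. For $p\ge1$, monotonicity of moments then gives $(\E|S|^p)^{1/p}\ge\E|S|\ge\|x\|_2/\sqrt3$. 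Multiplying through by $M^{1/p}$ gives \eqref{eq:embedding}. The delicate checks are the exact independence order of the enumerated family and the moment comparison $\E S^{2r}\le(2r-1)!!\|x\|_2^{2r}$.
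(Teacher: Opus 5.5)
Your proposal is correct and follows the paper's proof. The shared steps are: rows of $2r$-wise independent trace signs over $\F_{2^m}$, so that $\|Rx\|_p^p=M\,\E|X|^p$; the lower bound from $\E X^2\le(\E|X|)^{2/3}(\E X^4)^{1/3}$ together with $\E X^4\le3\|x\|_2^4$; and the upper bound from $\E X^{2r}\le(2r-1)!!\,\|x\|_2^{2r}$, where your Gaussian comparison is the same count as the paper's enumeration of perfect pairings.

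The one discrepancy is the row count. Your dual-BCH space has $2N^r$ rows when $N$ is a power of two, not the stated $N^{2r}$. To get exactly $N^{2r}$, enumerate, as the paper does, all $f(t)=a_0+a_1t+\cdots+a_{2r-1}t^{2r-1}$ with every $a_i\in\F_Q$, including the constant term. If you instead use exponents $1,\ldots,2r$ with no constant term, the column at $t=0$ is identically $+1$, so that family is not $2r$-wise independent as you claimed.
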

\begin{proof}
We construct every row explicitly. Averages in the proof are uniform averages over this finite list, counting repeated rows according to their multiplicity. They are a way to evaluate the deterministic construction.

\paragraph{The sign rows.}
Let $Q=2^h$ be the smallest power of two at least $N$, so $h\ge1$ and
\[
 N\le Q<2N.
\]
Choose distinct $t_1,\ldots,t_N\in\F_Q$. We use the absolute field trace
\[
 \Tr(z)=\Tr_{\F_Q/\F_2}(z)
       =\sum_{i=0}^{h-1}z^{2^i}.
\]
This is an $\F_2$-linear map into $\F_2$: additivity follows from the characteristic-two power identities, and $\Tr(z)^2=\Tr(z)$ follows from $z^{2^h}=z$. The displayed polynomial is nonzero and has degree $2^{h-1}<Q$, so it cannot vanish on all of $\F_Q$. The trace is therefore surjective onto $\F_2$. Its kernel has $Q/2$ elements, and each bit has exactly $Q/2$ preimages.

Enumerate all $Q^{2r}$ coefficient tuples of the polynomials
\[
 f(t)=a_0+a_1t+\cdots+a_{2r-1}t^{2r-1},
 \qquad a_i\in\F_Q.
\]
For each coefficient tuple, include the row
\begin{equation}\label{eq:sign-row}
 \bigl((-1)^{\Tr(f(t_1))},\ldots,(-1)^{\Tr(f(t_N))}\bigr)
\end{equation}
in $R$. Thus $M=Q^{2r}<(2N)^{2r}$, which is $O(N^{2r})$ for fixed $r$; if $N$ is a power of two, then $Q=N$ and $M=N^{2r}$. Distinct coefficient tuples may yield the same row; keeping those repetitions makes averaging over rows exactly the same as averaging over all coefficient tuples.

\paragraph{Independence of small sets of coordinates.}
Let $s=(s_1,\ldots,s_N)$ be a uniformly indexed row of $R$. Choose $1\le j\le\min\{2r,N\}$ distinct coordinate positions $i_1,\ldots,i_j$. The evaluation map
\[
 \F_Q^{2r}\longrightarrow\F_Q^j,
 \qquad(a_0,\ldots,a_{2r-1})
 \longmapsto\bigl(f(t_{i_1}),\ldots,f(t_{i_j})\bigr)
\]
is a surjective linear map. Indeed, Lagrange interpolation produces a polynomial of degree at most $j-1<2r$ for any prescribed tuple of evaluations. Its kernel has dimension $2r-j$, so every prescribed tuple has exactly $Q^{2r-j}$ preimages. The $j$ evaluations are consequently independent and uniform on $\F_Q$ under the full enumeration.

Applying the trace coordinatewise gives independent unbiased bits, because each prescribed bit has $Q/2$ preimages. The map $b\mapsto(-1)^b$ is a bijection from $\F_2$ to $\{-1,1\}$, so the corresponding \emph{$j$ signs} in \eqref{eq:sign-row} are independent and unbiased. More explicitly, for every $(\varepsilon_1,\ldots,\varepsilon_j)\in\{-1,1\}^j$,
\[
 \Pr(s_{i_1}=\varepsilon_1,\ldots,s_{i_j}=\varepsilon_j)
 =\frac{Q^{2r-j}(Q/2)^j}{Q^{2r}}=2^{-j}.
\]
This is $2r$-wise independence: every subset of at most $2r$ of the available coordinates has the distribution of independent signs. If $N\ge2r$, the bound on $j$ is $j\le2r$; if $N<2r$, it is $j\le N$, and all $N$ signs are jointly independent. When $N>2r$, full joint independence of all $N$ signs is unnecessary; only the stated independence of small subsets will be used.

Pairwise independence already gives
\begin{equation}\label{eq:embedding-gram}
 R^TR=MI_N.
\end{equation}
To verify this identity entry by entry, write
\[
 (R^TR)_{i\ell}=\sum_{t=1}^M R_{ti}R_{t\ell}.
\]
For $i=\ell$, each summand is $1$, so the sum is $M$. For $i\ne\ell$, pairwise independence says that $(R_{ti},R_{t\ell})$ takes each of the four values $(1,1),(1,-1),(-1,1),(-1,-1)$ on exactly $M/4$ indexed rows. Thus
\[
 (R^TR)_{i\ell}=\frac M4(1-1-1+1)=0.
\]
This proves \eqref{eq:embedding-gram} by exact counting, including any repeated rows.

\paragraph{The lower norm estimate.}
Fix $x\in\R^N$ and set $X=\sum_i s_ix_i$ for the uniformly indexed row $s$ above. As a function of the selected row index, $X(t)=(Rx)_t$ for $1\le t\le M$. Thus, by the definition of a uniform average,
\[
 \E|X|^p=\frac1M\sum_{t=1}^M
       \left|\sum_{i=1}^N R_{ti}x_i\right|^p
       =\frac1M\|Rx\|_p^p.
\]
Consequently,
\begin{equation}\label{eq:finite-moment}
 \|Rx\|_p^p=M\E|X|^p.
\end{equation}
This identity requires no independence; it is simply the sum defining the $\ell_p$ norm written as an average. We now use independence to compare this finite moment with $\|x\|_2$. Since $\E s_i^2=1$ and $\E s_is_j=0$ for $i\ne j$,
\[
 \E X^2=\sum_i x_i^2\E s_i^2
             +2\sum_{i<j}x_ix_j\E s_is_j
          =\sum_i x_i^2=\|x\|_2^2.
\]
Four-wise independence, available since $r\ge2$, gives
\begin{equation}\label{eq:fourth-moment}
 \E X^4=\sum_i x_i^4+6\sum_{i<j}x_i^2x_j^2
 =3\left(\sum_i x_i^2\right)^2-2\sum_i x_i^4
 \le3\|x\|_2^4.
\end{equation}
In these expansions, a term with any sign appearing an odd number of times has average zero. In the fourth moment, the remaining terms either use one index four times, or two indices twice each. For a fixed pair $i<j$, the latter case has $\binom42=6$ arrangements, which explains the coefficient $6$ in \eqref{eq:fourth-moment}. Each expansion uses at most four distinct indices, so this reasoning also applies when $N<4$.

For scalar functions $f,g:\{1,\ldots,M\}\to\R$, H\"older's inequality for the uniform average over row indices states that
\[
 \E|fg|\le(\E|f|^u)^{1/u}(\E|g|^v)^{1/v},
 \qquad u,v>1,\qquad \frac1u+\frac1v=1.
\]
Here the product is pointwise, so $\E|fg|=M^{-1}\sum_{t=1}^M|f(t)g(t)|$. Take $f(t)=|X(t)|^{2/3}$, $g(t)=|X(t)|^{4/3}$ and the conjugate exponents $u=3/2$, $v=3$. Then $fg=|X|^2$, $|f|^{3/2}=|X|$, and $|g|^3=|X|^4$, giving
\[
 \E X^2\le(\E|X|)^{2/3}(\E X^4)^{1/3}.
\]
For $x\ne0$, raising to the power $3/2$ and rearranging gives
\[
 \E|X|\ge\frac{(\E X^2)^{3/2}}{(\E X^4)^{1/2}}
 \ge\frac{\|x\|_2}{\sqrt3}.
\]
For $x=0$, every required inequality holds directly. We also use monotonicity of the means on this probability space. For $0<a<b$, apply H\"older to $|X|^a\cdot1$ with exponents $b/a$ and $b/(b-a)$ to obtain
\[
 \E|X|^a\le(\E|X|^b)^{a/b},
 \qquad (\E|X|^a)^{1/a}\le(\E|X|^b)^{1/b}.
\]
The factor involving the constant function is $1$ because $\E1=1$. Equality of the exponents gives equality of the means. Taking $a=1$, $b=p$ when $p>1$, and using equality when $p=1$, gives
\begin{equation}\label{eq:moment-lower}
 (\E|X|^p)^{1/p}\ge\E|X|\ge\frac{\|x\|_2}{\sqrt3}.
\end{equation}
This establishes a lower bound that includes $p=1$.

\paragraph{The upper norm estimate.}
Since $p\le2r$, monotonicity of the means gives
\[
 (\E|X|^p)^{1/p}\le(\E X^{2r})^{1/(2r)}.
\]
Every term in the expansion of $X^{2r}$ uses at most $2r$ distinct signs, so its average agrees with that for fully independent signs. Only terms in which each index has even multiplicity survive. Each surviving monomial is nonnegative and admits a pairing of its $2r$ positions in which paired positions have the same coordinate index.

A perfect pairing partitions the $2r$ labeled positions into $r$ unordered pairs. Position $1$ has $2r-1$ possible partners. After removing this pair, take the smallest remaining position; it has $2r-3$ possible partners. Continue in this way until the last two positions must be paired. This procedure counts each pairing exactly once, because choosing the smallest unused position fixes the order in which its pairs are formed. The number of perfect pairings is therefore
\[
 (2r-1)(2r-3)\cdots3\cdot1=(2r-1)!!.
\]
For one fixed pairing, summing over an arbitrary index for each pair gives
\[
 \left(\sum_i x_i^2\right)^r=\|x\|_2^{2r}.
\]
Summing this over all pairings counts every surviving term at least once. All terms are nonnegative, so overcounting gives an upper bound:
\[
 \E X^{2r}\le(2r-1)!!\,\|x\|_2^{2r}
 \le(2r-1)^r\|x\|_2^{2r}.
\]
Consequently,
\begin{equation}\label{eq:moment-upper}
 (\E|X|^p)^{1/p}\le\sqrt{2r-1}\,\|x\|_2.
\end{equation}
When $p\le2$, the simpler bound $(\E|X|^p)^{1/p}\le(\E X^2)^{1/2}=\|x\|_2$ improves the upper constant to $1$. Combining \eqref{eq:finite-moment}, \eqref{eq:moment-lower}, and \eqref{eq:moment-upper} proves \eqref{eq:embedding}. Equation~\eqref{eq:embedding-gram} gives injectivity: if $Rx=0$, then $M\|x\|_2^2=x^TR^TRx=0$.

\paragraph{Deterministic construction time.}
The number of rows is $Q^{2r}=O(N^{2r})$, and each has $N$ signs. The field $\F_Q$ can be constructed by finding a monic irreducible polynomial over $\F_2$ of degree $h=\log_2Q$. Even enumerating all monic candidates of this degree and testing divisibility by all monic polynomials of positive degrees at most $h/2$ costs polynomial time in $Q$: there are $Q$ candidates and fewer than $2\sqrt Q$ possible trial divisors. A reducible polynomial has an irreducible factor of degree at most half its degree, so this test is sufficient. An irreducible polynomial of every positive degree exists over $\F_2$. Field arithmetic and polynomial evaluation are then explicit, and the trace uses $h-1$ successive squarings and $h-1$ additions. Since $Q<2N$ and $r$ is fixed, constructing the entire matrix takes deterministic polynomial time in $N$.
\end{proof}

\subsection{Transferring the promise and the rational radius}
\begin{proposition}[Deterministic finite-norm transfer]\label{prop:transfer}
For every fixed finite effective $p\ge1$, deterministic arbitrary-constant hardness of Euclidean GapSVP implies deterministic arbitrary-constant hardness of GapSVP$_p$.
\end{proposition}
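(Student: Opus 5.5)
The reduction composes with the sign embedding of Theorem~\ref{thm:embedding}. Fix $p$ and a target factor $\rho>1$. Fix an integer $r\ge\max\{2,\lceil p/2\rceil\}$ and put $\kappa=\sqrt3\sqrt{2r-1}$, the distortion of the embedding. Choose a rational Euclidean factor $\rho_2$ with $\rho_2>\kappa\rho$, say $\rho_2=\lceil\kappa\rho\rceil+1$. By the assumed Euclidean hardness, some deterministic polynomial-time reduction produces instances $(\cB,d)$ of $\rho_2$-GapSVP with $\cB\in\Z^{N\times n}$, rational $d$, and the promise $\lambda_2\le d$ or $\lambda_2>\rho_2 d$. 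We may pad $N\ge2$ if needed.

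Construct $R\in\{-1,1\}^{M\times N}$ and output the basis $R\cB$. Since $R$ is injective on $\R^N$ by $R^TR=MI_N$, $R\cB$ still has full column rank, has integer entries, and generates $R L(\cB)$. Every lattice vector is $Rx$ with $x\in L(\cB)$. By \eqref{eq:embedding}, taking minima over the same nonzero $x$ gives
\[
 \frac{M^{1/p}}{\sqrt3}\lambda_2(L)\le\lambda_p(RL)\le\sqrt{2r-1}\,M^{1/p}\lambda_2(L).
\]
In a YES case, $\lambda_p\le\sqrt{2r-1}M^{1/p}d$. In a NO case, $\lambda_p>M^{1/p}\rho_2 d/\sqrt3$, and the ratio between the two exceeds $\rho_2/\kappa>\rho$ with fixed slack.

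The main obstacle is producing a \emph{rational} radius $d_p$, because $M^{1/p}$ and $\sqrt{2r-1}$ are generally irrational. I would use the effectiveness convention. Compute a rational $\mu$ with $\sqrt{2r-1}M^{1/p}d\le\mu\le(1+\eta)\sqrt{2r-1}M^{1/p}d$ for a fixed small rational $\eta>0$, namely an upper approximation of $u^{1/p}$ applied to $u=M$ together with a rational upper bound for $\sqrt{2r-1}$. Output $d_p=\mu$. YES gives $\lambda_p\le d_p$. NO gives $\lambda_p>M^{1/p}\rho_2d/\sqrt3\ge\rho_2 d_p/((1+\eta)\kappa)$. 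We need $\rho_2\ge(1+\eta)\kappa\rho$, which is secured by fixing $\eta$ first and then choosing $\rho_2$ large enough; all these are constants depending only on $p$ and $\rho$.

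Finally, check polynomial size. We have $M=O(N^{2r})$ with $r$ fixed. The entries of $R\cB$ are signed sums of $N$ entries, and $R$ is built in deterministic polynomial time. The bit length of $d_p$ is polynomial by the approximation convention. Composing with the source NP-hardness reduction gives the claim.
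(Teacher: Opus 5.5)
Your proposal is correct and follows essentially the same route as the paper: compose with the sign embedding $R$, compare minima via \eqref{eq:embedding}, start from a Euclidean factor exceeding $\rho\sqrt{3(2r-1)}$ with fixed slack, and replace the irrational radius $\sqrt{2r-1}\,M^{1/p}d$ by a rational upper approximation. The only difference is cosmetic: the paper fixes the slack factor $2$, taking $\gamma>2\rho\sqrt{3(2r-1)}$ and approximating both factors within relative error $1/4$, whereas you use a general $1+\eta$ with $\rho_2\ge(1+\eta)\kappa\rho$.
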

\begin{proof}
Fix the desired factor $\rho>1$ and an integer $r\ge\max\{2,\lceil p/2\rceil\}$. Use Euclidean hardness at a fixed factor
\begin{equation}\label{eq:transfer-gap}
 \gamma>2\rho\sqrt{3(2r-1)}.
\end{equation}
Given a Euclidean instance $(B,d)$ with $B\in\Z^{N\times n}$, append a zero row to $B$ if necessary to ensure $N\ge2$; this preserves every vector norm and the lattice rank. Construct $R$ from Theorem~\ref{thm:embedding} and output the integer basis $B'=RB$. Since $R$ is injective,
\[
 L(B')=RL(B),\qquad\rank B'=\rank B.
\]
The norm inequalities hold simultaneously for every real vector, and hence for every lattice vector. Taking minima gives
\begin{equation}\label{eq:transfer-minima}
 \frac{M^{1/p}}{\sqrt3}\lambda_2(L(B))
 \le\lambda_p(L(B'))
 \le\sqrt{2r-1}\,M^{1/p}\lambda_2(L(B)).
\end{equation}
Compute a rational $d'>0$ with
\begin{equation}\label{eq:transfer-radius}
 \sqrt{2r-1}\,M^{1/p}d
 \le d'\le2\sqrt{2r-1}\,M^{1/p}d.
\end{equation}
The fixed-$p$ convention and ordinary square-root approximation provide this upper approximation in polynomial time and with polynomial bit length. For example, approximate each of the two positive factors from above to relative error at most $1/4$; their product is within a factor $25/16<2$.

If the Euclidean input is YES, \eqref{eq:transfer-minima} implies $\lambda_p(L(B'))\le d'$. If it is NO, the same inequality gives
\[
 \lambda_p(L(B'))
 >\frac{M^{1/p}}{\sqrt3}\gamma d
 >2\rho\sqrt{2r-1}\,M^{1/p}d
 \ge\rho d',
\]
where the second step uses \eqref{eq:transfer-gap} and the last uses \eqref{eq:transfer-radius}. Thus the output has the required promise.

The matrix $R$ has polynomially many rows and entries of absolute value one. Each entry of $RB$ is a signed sum of $N$ entries of $B$, increasing the entry bit length by at most $O(\log N)$. Matrix multiplication and radius approximation take deterministic polynomial time. This proves the reduction.
\end{proof}

\begin{proof}[Proof of Theorem~\ref{thm:allp}]
Apply Theorem~\ref{thm:main} at the larger fixed Euclidean factor required by Proposition~\ref{prop:transfer}, and compose the two reductions. Their dependence on $p$ and $\rho$ is through fixed constants only.
\end{proof}

\subsection{A direct extension for \texorpdfstring{$p\ge2$}{p >= 2}}
The tensor instances permit a shorter argument when $p\ge2$. For every integer vector $z$, the coordinate inequality $|z_i|^p\ge|z_i|^2$ gives $\|z\|_p^p\ge\|z\|_2^2$. Therefore Proposition~\ref{prop:schedule} gives, in NO instances,
\[
 \lambda_p(L^{\otimes T})^p\ge bk^T.
\]
In YES instances, the binary witness gives
\[
 \|y^{\otimes T}\|_p^p=\|y\|_1^T<(1+\varepsilon)^Tk^T.
\]
Choose $b>\rho^p$, its fixed even tensor order $T$, and then rational $\varepsilon>0$ so small that $\rho^p(1+\varepsilon)^T<b$. To encode the radius explicitly, choose a fixed rational $\eta>0$ with
\[
 \rho^p(1+\eta)^p(1+\varepsilon)^T<b.
\]
An upper rational approximation $d$ to $((1+\varepsilon)k)^{T/p}$ within relative factor $1+\eta$ then satisfies the YES bound and leaves $\lambda_p(L^{\otimes T})>\rho d$ in NO instances. Such an approximation is covered by the fixed-$p$ convention. The exponent and prime selection in Section~\ref{sec:hardness} is unchanged. For $p<2$, the coordinate inequality no longer supplies this lower bound, so the sign embedding provides the required extension.

\section{Scope of the results}\label{sec:scope}
Using the projection theorem from~\cite{wan}, restated as Theorem~\ref{thm:gadget}, and the classical hardness of Gap Exact Set Cover, we have proved the block reduction, its two soundness bounds, the tensor amplification, the prime selection, and the deterministic norm transfer. The projection theorem is an established result used in the proof, rather than an additional assumption in the hardness statements.

The result is for every fixed approximation factor. The tensor order $T$, the source gap $\Gamma$, the gadget exponents, and their effective lower cutoffs may depend on that factor. If the desired factor grows with the input size, then $b,T$ also change, and the condition $\xi<1/(2T(b+1))$ forces a changing gadget exponent. This affects the base lattice size as well as the size increase from tensoring. The source reduction and the gadget construction would also require uniform estimates in their changing parameters. The present fixed-constant proof supplies no such uniform running-time bounds, so it does not establish growing-factor hardness under quasipolynomial- or subexponential-time reductions.

The finite-norm proof includes $p=1$. Its sign-family size depends on a finite moment order chosen from $p$, so the proof does not extend to $p=\infty$. Deterministic arbitrary-constant hardness for the infinite norm follows independently from Dinur's stronger theorem~\cite{dinur}.

Only Hamming distance of linear codes over a field is multiplied in the amplification proof. The integer tensor vectors themselves are controlled by Lemma~\ref{lem:seed} and Theorem~\ref{thm:phase}. No Lee-metric product formula or multiplicativity of lattice $\ell_1$ minima is required.

\Needspace{14\baselineskip}
\appendix
\section{Counterexample to the Lee-metric and lattice product formulas}\label{app:counterexample}
We give one example that disproves both the lattice $\ell_1$ identity and the Lee-metric quotient-code formula. In this appendix only, it is convenient to describe lattice vectors using a row basis:
\begin{equation}\label{eq:counter-basis}
 G=\begin{pmatrix}
 1&0&2&7\\
 0&1&7&5\\
 0&0&20&0\\
 0&0&0&20
 \end{pmatrix},\qquad\Lambda=\Z^4G.
\end{equation}
The column basis in the convention of the main text is $G^T$. Write $g_i$ for the rows of $G$.

\subsection{The lattice minimum is nine}
The vector
\[
 g_1+3g_2-g_3-g_4=(1,3,3,2)
\]
has $\ell_1$ norm $9$, so $\lambda(\Lambda)\le9$. Conversely, $(x,y,z,t)\in\Z^4$ belongs to $\Lambda$ exactly when
\[
 z\equiv2x+7y\pmod{20},\qquad
 t\equiv7x+5y\pmod{20}.
\]
Define the least absolute representative size by
\[
 \ell_{20}(a)=\min\{a,20-a\}\quad(0\le a<20),
\]
and extend it periodically to all integers. For residues $a,b$ of $x,y$, the smallest possible $\ell_1$ norm among representatives of the resulting four-coordinate residue is
\[
 f(a,b)=\ell_{20}(a)+\ell_{20}(b)
       +\ell_{20}(2a+7b)+\ell_{20}(7a+5b).
\]
The following exact finite table gives its minimum over $0\le b<20$, excluding $(a,b)=(0,0)$:
\begin{center}
\setlength{\tabcolsep}{5pt}
\begin{tabular}{c|rrrrrrrrrrr}
\toprule
$a$&0&1&2&3&4&5&6&7&8&9&10\\
\midrule
$\min_b f(a,b)$&9&9&9&9&9&9&12&10&13&14&18\\
\bottomrule
\end{tabular}
\end{center}
Each entry is the minimum of the twenty values of the displayed formula, or nineteen when $a=0$. The other rows follow from $f(-a,-b)=f(a,b)$. Thus every nonzero residue of a lattice vector modulo $20$ has Lee weight at least $9$. A nonzero integer vector with zero residue modulo $20$ has a coordinate of absolute value at least $20$. Both cases show that every nonzero vector of $\Lambda$ has $\ell_1$ norm at least $9$, and hence
\begin{equation}\label{eq:counter-minimum}
 \lambda(\Lambda)=9.
\end{equation}

\subsection{A tensor of weight eighty}
Consider the vector of $\Lambda\otimes\Lambda$ given by
\[
 w=g_1\otimes g_3+g_2\otimes g_4
    -g_3\otimes g_1-g_4\otimes g_2.
\]
Each $g_i$ is a $1\times4$ row vector, so $g_i\otimes g_j$ is a $1\times16$ row vector. Reshaping its consecutive blocks of four entries as rows gives the $4\times4$ outer-product matrix $g_i^Tg_j$, whose $(s,t)$ entry is $(g_i)_s(g_j)_t$. Thus $w$ reshapes as
\[
 W=g_1^Tg_3+g_2^Tg_4-g_3^Tg_1-g_4^Tg_2,
\]
where each product is ordinary matrix multiplication. Evaluating these products gives the nonzero matrix
\begin{equation}\label{eq:counter-tensor}
 W=\begin{pmatrix}
 0&0&20&0\\
 0&0&0&20\\
 -20&0&0&0\\
 0&-20&0&0
 \end{pmatrix}.
\end{equation}
There are four nonzero entries, each of absolute value $20$. Therefore
\begin{equation}\label{eq:counter-lattice}
 \lambda(\Lambda\otimes\Lambda)\le80<81=\lambda(\Lambda)^2.
\end{equation}
The universal lattice $\ell_1$ product identity is false. This example is consistent with Lemma~\ref{lem:seed}: its squared Euclidean norm is $1600$, which is greater than $\lambda(\Lambda)^2=81$.

\subsection{The Lee-metric quotient-code formula also fails}
For a vector $c$ modulo a positive integer $m$, its Lee weight $w_{\mathrm L}(c)$ is the sum of the least absolute values of its coordinate representatives. For a nonzero additive code $C$, let $d_{\mathrm L}(C)$ be the least Lee weight of a nonzero word.

In quotient-lattice notation, the Lee-metric product formula from Ojiro and Matsui~\cite{om} used in the first version would assert that, whenever $\Lambda_i$ have row Hermite-normal-form bases whose diagonal entries are strictly smaller than $m_i$ and $m_i\Z^{n_i}\subseteq\Lambda_i$,
\begin{equation}\label{eq:false-lee}
\begin{split}
 d_{\mathrm L}\bigl((\Lambda_1\otimes\Lambda_2)/
                    m_1m_2\Z^{n_1n_2}\bigr)
 \stackrel{?}{=}
 d_{\mathrm L}(\Lambda_1/m_1\Z^{n_1})\,
 d_{\mathrm L}(\Lambda_2/m_2\Z^{n_2}).
\end{split}
\end{equation}
Take $\Lambda_1=\Lambda_2=\Lambda$ from \eqref{eq:counter-basis}, and $m_1=m_2=40$. The displayed $G$ is already in row Hermite normal form: it is upper triangular, with positive pivots, and each entry above a pivot lies between zero and that pivot minus one. Its diagonal entries $1,1,20,20$ are all strictly below $40$.

Also, $20\Z^4\subseteq\Lambda$. For the first two standard basis vectors this follows from
\[
 20e_1=20g_1-2g_3-7g_4,\qquad
 20e_2=20g_2-7g_3-5g_4;
\]
the last two follow from $20e_3=g_3$ and $20e_4=g_4$. In particular, $40\Z^4\subseteq\Lambda$, as required.

Every balanced representative of a word of $\Lambda/40\Z^4$ still lies in $\Lambda$, because it differs from a lattice representative by an element of $40\Z^4$. A nonzero word has a nonzero balanced representative, so \eqref{eq:counter-minimum} bounds its Lee weight below by $9$. The vector $(1,3,3,2)$ remains a nonzero word of weight $9$ modulo $40$. Thus
\[
 d_{\mathrm L}(\Lambda/40\Z^4)=9.
\]
Tensoring the inclusions $40\Z^4\subseteq\Lambda$ gives
\[
 1600\Z^{16}\subseteq\Lambda\otimes\Lambda,
\]
so the tensor quotient in \eqref{eq:false-lee} is well-defined. The matrix $W$ in \eqref{eq:counter-tensor} remains nonzero modulo $1600$, and its entries $\pm20$ are already of least absolute value in their residue classes. Its Lee weight in this quotient is therefore $80$. Consequently,
\[
 d_{\mathrm L}\bigl((\Lambda\otimes\Lambda)/1600\Z^{16}\bigr)
 \le80<81
 =d_{\mathrm L}(\Lambda/40\Z^4)^2.
\]
All the stated modulus and diagonal conditions hold, yet the proposed Lee-metric product formula fails. The use of different moduli in the two quotients matters: the tensor displayed above would become zero modulo $20$, whereas it is a nonzero counterexample modulo $40^2$.

Finally, $\det G=400$. For every prime $q$ not dividing $400$, the matrix $G\bmod q$ is invertible, so $\Lambda\bmod q=\F_q^4$ has Hamming distance one. Thus choosing a large prime for this example does not create the modular-distance hypothesis used in Theorem~\ref{thm:phase}.

\Needspace{9\baselineskip}
\bigskip
\begin{flushleft}
\small
\textsc{Daqing Wan}\\
Center for Discrete Mathematics\\
College of Mathematics and Statistics\\
Chongqing University, Chongqing 401331, China\\
\textit{Email:} \href{mailto:dwan@math.uci.edu}{\nolinkurl{dwan@math.uci.edu}}
\end{flushleft}
\end{document}